\def\etal{{\it et al.}}
\def\ie{{\it i.e.}}
\def\eg{{\it e.g.}}
\def\opt{{\mbox{\sc opt}}}
\def\eps{\varepsilon}
\def\polylog{\mathop{\rm polylog}}
\def\DDD{\mathcal{D}}
\def\LLL{\mathcal{L}}
\def\RRR{\mathcal{R}}
\def\SSS{\mathcal{S}}
\newcommand{\BIGO}[1]{\mathcal{O}\kern-2pt\left(#1\right)}
\newcommand{\LITTLEO}[1]{{o}\kern-2pt\left(#1\right)}
\newcommand{\THETA}[1]{\Theta\kern-2pt\left(#1\right)}
\newcommand{\ceil}[1]{\left\lceil #1 \right\rceil}
\newcommand{\hide}[1]{}
\begin{document}

\pagestyle{headings}

\title{Fast Vertex Guarding for Polygons}
\author{James King}
\institute{Department of Physics\\University of Oxford}

\maketitle

\begin{abstract}
For a polygon $P$ with $n$ vertices, the vertex guarding problem asks for the minimum subset $G$ of $P$'s vertices such that every point in $P$ is seen by at least one point in $G$.  This problem is NP-complete and APX-hard.  The first approximation algorithm (Ghosh, 1987) involves decomposing $P$ into $\BIGO{n^4}$ cells that are equivalence classes for visibility from the vertices of $P$.  This discretized problem can then be treated as an instance of set cover and solved in $\BIGO{n^5}$ time with a greedy $\BIGO{\log n}$-approximation algorithm.  Ghosh (2010) recently revisited the algorithm, noting that minimum visibility decompositions for simple polygons (Bose \etal, 2000) have only $\BIGO{n^3}$ cells, improving the running time of the algorithm to $\BIGO{n^4}$ for simple polygons.

In this paper we show that, since minimum visibility decompositions for simple polygons have only $\BIGO{n^2}$ cells of \emph{minimal} visibility (Bose \etal, 2000), the running time of the algorithm can be further improved to $\BIGO{n^3}$.  This result was obtained independently by Jang and Kwon (2011).  We extend the result of Bose \etal~to polygons with holes, showing that a minimum visibility decomposition of a polygon with $h$ holes has only $\BIGO{(h+1)n^3}$ cells and only $\BIGO{(h+1)^2n^2}$ cells of minimal visibility.  We exploit this result to obtain a faster algorithm for vertex guarding polygons with holes.  We then show that, in the same time complexity, we can attain approximation factors of $\BIGO{\log\log\opt}$ for simple polygons and $\BIGO{(1+\log{(h+1)}) \log\opt}$ for polygons with holes.
\end{abstract}

\clearpage

\setcounter{page}{1}

\section{Introduction}

Art gallery problems, \ie, polygon guarding problems, are motivated by the question, ``How many security cameras are required to guard an art gallery?''  The art gallery is modeled as a connected polygon $P$.  A camera, which we henceforth call a \emph{guard}, is modeled as a point in the polygon, and we say that a guard $g$ \emph{sees} a point $q$ in the polygon if the line segment $\overline{gq}$ is contained in $P$.  The visibility polygon of a point $p$, denoted $\mathrm{Vis}(p)$, is the set of points in $P$ that see $p$.  We call a set $G$ of points a \emph{guarding set} if every point in $P$ is seen by some $g\in G$, \ie, if $\bigcup_{g\in G} \mathrm{Vis}(g) = P$.  Let $V(P)$ denote the vertex set of $P$ and let $\partial P$ denote the boundary of $P$.  We assume that $P$ is closed and non-degenerate so that $V(P) \subset \partial P \subset P$.

We consider the minimization problem that asks, given an input polygon $P$ with $n$ vertices, for a minimum guarding set for $P$.  Variants of this problem typically differ based on what points in $P$ must be guarded and where guards can be placed, as well as whether $P$ is simple or contains holes.  Typically we want to guard either $P$ or $\partial P$, and our set of potential guards is typically $V(P)$ (vertex guards), $\partial P$ (perimeter guards), or $P$ (point guards).  This paper concerns the variant in which we must guard all of $P$ from vertices of $P$.  For results on art gallery problems not related to minimization problems we direct the reader to O'Rourke's book \cite{orourke1987}, which is available for free online.

\subsection{Related Work}

\subsubsection{Hardness Results}

The problem was proved to be NP-complete for polygons with holes by O'Rourke and Supowit \cite{orourke1983}.  For guarding simple polygons it was proved to be NP-complete for vertex guards by Lee and Lin \cite{lee1986}; their proof was generalized to work for point guards by Aggarwal \cite{aggarwal1984}.  This raises the question of approximability.  There are two major hardness results.  First, for guarding simple polygons, Eidenbenz \cite{eidenbenz1998} proved that the problem is APX-complete, meaning that we cannot do better than a constant-factor approximation algorithm in polynomial time unless $\mathrm{P=NP}$.  Subsequently, for guarding polygons with holes, Eidenbenz \etal\ \cite{eidenbenz2001} proved that the minimization problem is as hard to approximate as set cover{} in general if there is no restriction on the number of holes.  It therefore follows from results about the inapproximability of set cover{} \cite{raz1997,feige1998,alon2006} that, for polygons with an unbounded number of holes, it is NP-hard to find a guarding set of size $o(\log n)$.  These hardness results hold whether we are dealing with vertex guards, perimeter guards, or point guards.

\subsubsection{Approximation Algorithms}
Ghosh \cite{ghosh1987} provided an $O(\log n)$-approximation algorithm for guarding polygons with or without holes with vertex guards.  His algorithm decomposes the input polygon into a polynomial number of cells such that each point in a given cell is seen by the same set of vertices.  This discretization allows the guarding problem to be treated as an instance of set cover and solved using general techniques.  In fact, applying methods for set cover developed after Ghosh's algorithm, it is easy to obtain an approximation factor of $O(\log\opt)$ for vertex guarding simple polygons or $O(\log h \log \opt)$ for vertex guarding a polygon with $h$ holes.

When considering point guards or perimeter guards, discretization is far more complicated since two distinct points will not typically be seen by the same set of potential guards even if they are very close to each other.  Deshpande \etal{}\null{} \cite{deshpande2007} obtain an approximation factor of $O(\log\opt)$ for point guards or perimeter guards by developing a sophisticated discretization method that runs in pseudopolynomial time\footnote{It is a pseudopolynomial-time algorithm in that its running time may be linear in the ratio between the longest and shortest distances between two vertices.}.  Efrat and Har-Peled \cite{efrat2006} provided a randomized algorithm with the same approximation ratio that runs in fully polynomial expected time; their discretization technique involves only considering guards that lie on the points of a very fine grid.

\subsection{Range Spaces and Discretization}

Guarding problems can naturally be expressed as instances of set cover or hitting set.  We wish to model an instance of a guarding problem as an instance of hitting set on a range space $\SSS = (X,\RRR)$, constructed as follows.  $X$ is equal to the set $S_G$ of potential guard locations.  For each point $p$ that needs to be guarded, $R_p$ is the set of potential guards that see $p$.  Now $\RRR = \{R_p:p\in S_T\}$, where $S_T$ is the set of points that must be guarded.  For the vertex guarding problem, $S_G=V(P)$ and $S_T = P$.

We assume $S_G$ is finite; in our case $|S_G| = n$.  If $S_T$ is \emph{not} finite, \eg, when $S_T=P$, we need to discretize it.  The goal of discretization is to find a finite representative subset $S_T'\subset S_T$ such that any subset of $S_G$ that guards $S_T'$ also guards $S_T$.  With such a set we are able to forget about $S_T$ and focus on the finite range space $\left(S_G, \{R_p:p\in S_T'\} \right)$ induced by $S_G$ and $S_T'$.

We consider general techniques for solving hitting set for finite range spaces.  For a finite range space $\SSS = (X,\RRR)$ the time complexity typically depends on $|X|$ (\ie, the number of elements) and $|\RRR|$ (\ie, the number of ranges).

\subsection{Visibility Decompositions for Polygons}

For points $p,q\in S_T$, we say that $p$ and $q$ are equivalent if and only if $R_p=R_q$.  In a decomposition of a polygon into cells, we say that a cell is an \emph{equivalence cell} if all points are equivalent.  A natural discretization strategy is to partition $S_T$ into a finite number of sets that are closed under equivalence, and then to build a subset $S_T'$ by taking one representative point from each set.  A subset of $S_G$ guards $S_T$ if and only if it guards $S_T'$.

Ghosh \cite{ghosh1987} did this for the vertex guarding problem in which $S_G =V$ and $S_T = P$.  His algorithm decomposes the input polygon into a polynomial number of cells such that each point in a given cell is seen by the same set of vertices.  For two distinct vertices that see each other, consider the line through them.  The set of all such lines decomposes $P$ into a number of equivalence cells (see Figure \ref{fig:decomposition_1}).  Using a simple inequality for general line arrangements it can be seen that the number of cells is $\BIGO{n^4}$.  Ghosh originally used his discretization technique, along with the greedy set cover approximation algorithm (see, \eg, \cite[pp.~16--19]{vazirani2001}), to provide a $\BIGO{\log n}$-approximation algorithm.  This decomposition can be generalized to work for any finite set $S_G$ of potential guards---simply shoot a ray from every point $p\in S_G$ through every vertex seen by $p$.  In each of the $\BIGO{|S_G|^2n^2}$ resulting cells, any two points see the same subset of $S_G$.

\begin{figure}
\begin{minipage}[t]{0.46\linewidth}
\centering
\includegraphics[width=\columnwidth]{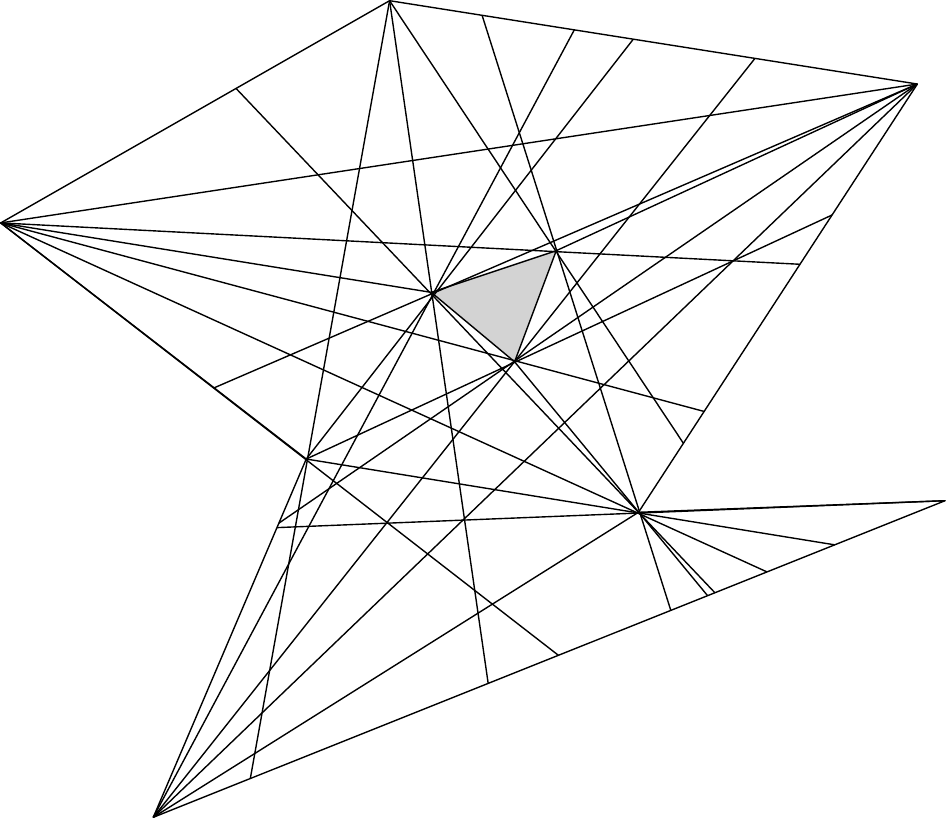}
\caption{A polygon decomposed into $\BIGO{n^4}$ cells by cutting along any line passing through two vertices that see each other.  The shaded region is a hole.\label{fig:decomposition_1}}
\end{minipage}
\hspace{0.05\linewidth}
\begin{minipage}[t]{0.46\linewidth}
\centering
\includegraphics[width=\columnwidth]{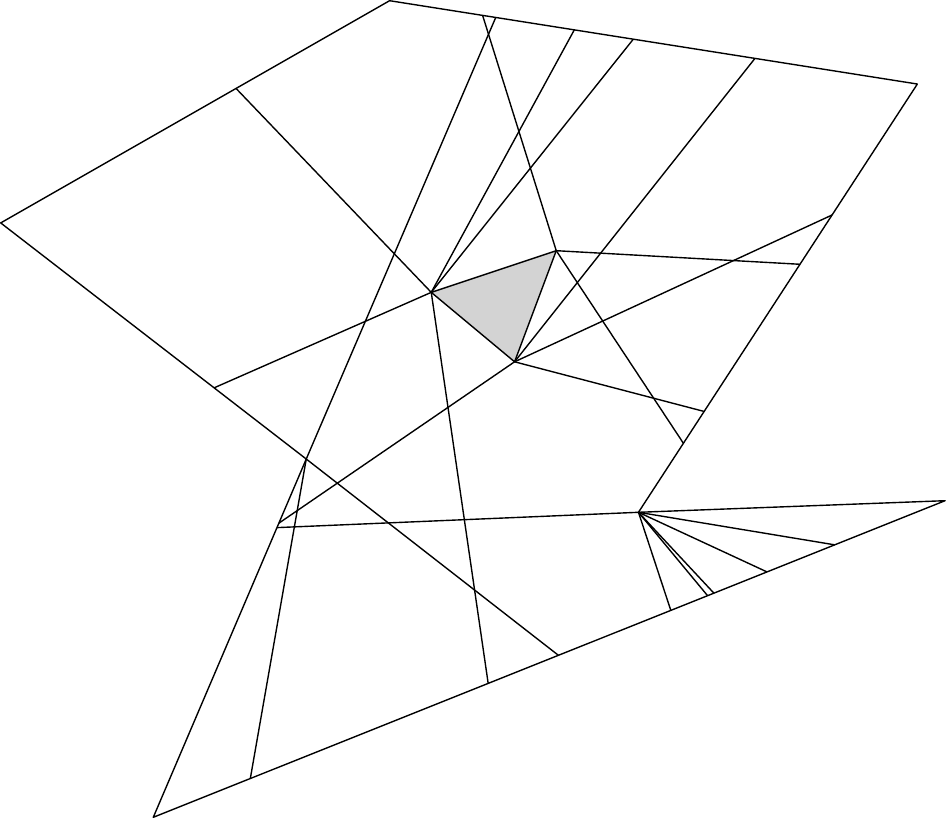}
\caption{A polygon decomposed into the $\BIGO{n^3}$ cells of its minimum visibility decomposition $\DDD_V(P,V(P))$.  The shaded region is a hole.\label{fig:decomposition_2}}
\end{minipage}
\end{figure}

Bose \etal{} \cite{bose2002} introduced a minimum decomposition with fewer cells (some of their results were obtained independently by Guibas \etal~\cite{guibas1997}, but we focus on the results as stated and proved by Bose \etal).  Let the \emph{visibility decomposition}, denoted $\DDD_V(P,S_G)$, be the minimum decomposition of a polygon into equivalence cells with regard to $S_G$.  It is minimum in that the union of cell boundaries in this decomposition is exactly equal to $\bigcup_{p\in S_G}\partial(\mathrm{Vis}(p))$.  The decomposition can be constructed as follows.  For a point $p\in S_G$ that sees a vertex $v$, instead of cutting along the entire ray shot from $p$ through $v$, we leave the line segment $\overline{pv}$ and cut only from $v$ until we hit $\partial P$ (see Figure \ref{fig:decomposition_2}).  Bose \etal\ call such a cut a \emph{window of point} $p$ and they note that the boundary of $\mathrm{Vis}(p)$ consists only of windows of $p$ and parts of $\partial P$.  They proved an upper bound of $\BIGO{n^3}$ for the number of cells in $\DDD_V(P,V(P))$ for a simple polygon $P$.

Ghosh \cite{ghosh2010} recently revisited his algorithm, using these minimum visibility decompositions to improve the running time.  His updated algorithm guarantees an approximation factor of $\BIGO{\log n}$ and runs in $\BIGO{n^4}$ for simple polygons and $\BIGO{n^5}$ for polygons with holes.

\subsection{Novel Contributions}

In this paper we exploit another result of Bose \etal~\cite{bose2002}, namely the fact that, while minimum visibility decompositions for simple polygons can have $\THETA{n^3}$ cells, they have only $\BIGO{n^2}$ cells of \emph{minimal} visibility.  We use this to further improve the running time of Ghosh's algorithm from $\BIGO{n^4}$ to $\BIGO{n^3}$.

We then extend the result of Bose \etal~to polygons with $h$ holes, parameterizing bounds not only by $n$ but also by $h$.  We show that a minimum visibility decomposition of a polygon with $h\geq 0$ holes has only $\BIGO{(h+1)n^3}$ cells (Theorem \ref{thm:cells}) and only $\BIGO{(h+1)^2n^2}$ cells of minimal visibility (Theorem \ref{thm:sinks}).  We exploit this result to improve the running time of Ghosh's algorithm from $\BIGO{n^5}$ to $\BIGO{(h+1)^2n^3}$; this is a strict asymptotic improvement for $h = \LITTLEO{n}$.

Having presented algorithms with faster running times, we turn our attention to improving the approximation factor.  We show that, with the same time complexity bound of $\BIGO{(h+1)^2n^3}$, we can apply standard random sampling techniques for range spaces of bounded VC dimension to obtain an approximation factor of $\BIGO{(1+\log{(h+1)}) \log\opt}$ for polygons with $h\geq 0$ holes (Theorem \ref{thm:main-withholes}).

Finally, we show that, with the same time bound of $\BIGO{n^3}$, we can achieve an approximation ratio of $\BIGO{\log\log\opt}$ for simple polygons using the improved $\eps$-net finders of King and Kirkpatrick \cite{king2010loglog} (Theorem \ref{thm:main-loglog}).

The author originally suggested the idea of exploiting the number of sinks to improve running time in a recent thesis \cite{king2010thesis}.  This idea was also used independently by Jang and Kwon \cite{jang2011}, who obtained the same time complexity as us for simple polygons.  Jang and Kwon also consider the problem of edge guards, whereas we do not.  However, they do not consider improvement of the $\BIGO{\log n}$ approximation factor or consider polygons with holes.


\subsection{Model of Computation}

We assume the real-RAM model of computation \cite{preparata1985}.  We also assume that the polygon and any holes are non-degenerate, and that vertices of the polygon are in general position, \ie, no three are collinear.

\section{Polygon Decompositions}

\subsection{Notation}

For a polygon $P$ with $h\geq 0$ holes, $\overline P \cup \partial P$ is the complement of the relative interior of $P$, or the closure of the complement of $P$.  We use $C_0,\dots,C_h$ to denote the $h+1$ components of $\overline P \cup \partial P$.  $C_0$ is the polygon's exterior and $C_1,\dots,C_h$ are the holes of the polygon.

In addition to left windows, right windows, left pockets, and right pockets defined by Bose \etal{}~for simple polygons \cite{bose2002} (see Figure \ref{fig:windows_and_pockets}), we have T-windows, which are trans-component windows.  A window is a T-window if its two endpoints are on different components.  There are no pockets associated with T-windows.

\begin{figure}
\centering
\includegraphics[height=6cm]{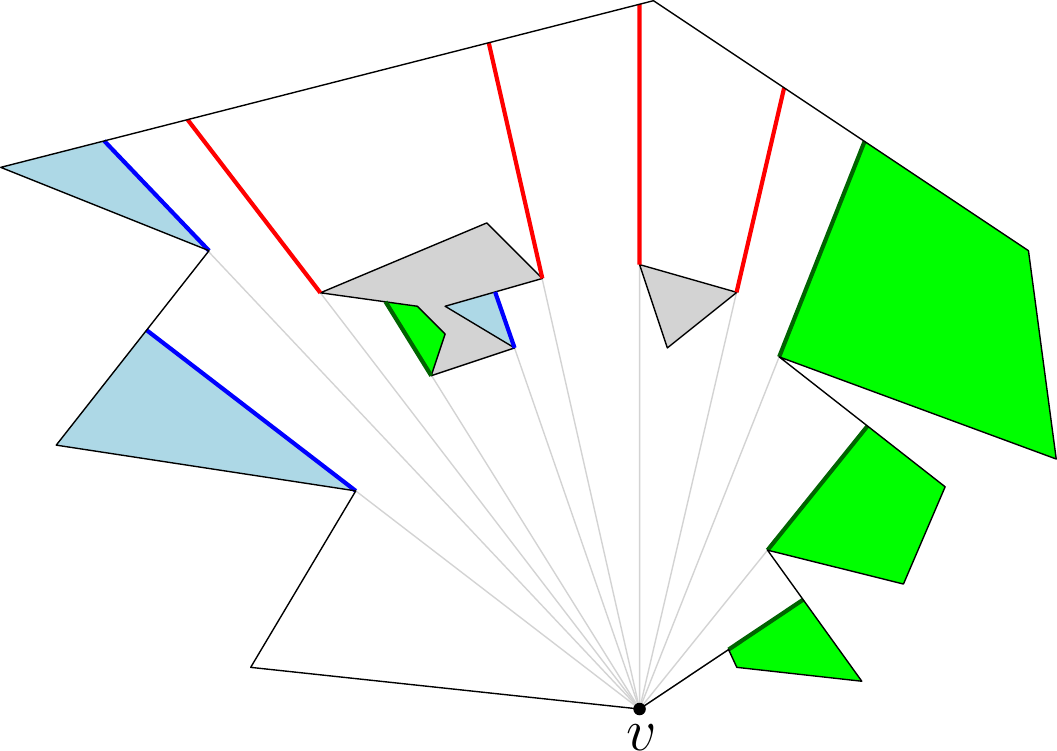}
\caption{Windows and pockets of a vertex $v$.  Windows are represented by coloured line segments: left windows are blue, right windows are green, and T-windows are red.  Left and right pockets are shaded blue and green respectively.\label{fig:windows_and_pockets}}
\end{figure}

A window of a vertex $v$ has two endpoints; the endpoint closer to $v$ is called the \emph{base} and the endpoint farther from $v$ is called the \emph{end}.  The \emph{window half plane} of a left (resp. right) window of $v$ whose base is $b$ is the half plane on the left (resp. right) side of the oriented line from $v$ to $b$.

\subsection{Bounding the Number of Cells}

Here we extend the results of Bose \etal{}~\cite{bose2002} concerning the number of cells in a visibility decomposition.  Our proofs are, to a great degree, extensions of their proofs.

\begin{lemma}[\cite{bose2002}]\label{lem:bose1}
No point $z$ in a pocket $Q$ of a window $w$ is visible to any point $y$ that is inside $w$'s half plane but outside $Q$.
\end{lemma}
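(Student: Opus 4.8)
The plan is to prove Lemma~\ref{lem:bose1} by a direct geometric argument that exploits the structure of a window and its associated pocket. First I would fix the setup precisely: let $w$ be a window of some vertex $v$, with base $b$ and end $e$, so that $w = \overline{be}$ lies along the ray shot from $v$ through the vertex at its base. The pocket $Q$ is the region of $P$ cut off by $w$ on the side away from $v$, and the window half plane is, by the definition given above, the half plane bounded by the line through $v$ and $b$ lying on the appropriate (left or right) side. The key structural fact I would invoke is that $w$ is a maximal chord of $P$ in the sense that it terminates on $\partial P$ at both ends; in particular, the line segment $w$ separates $Q$ from the rest of $P$ that lies inside the window half plane.

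The main step is a separation argument. Take any point $z \in Q$ and any point $y$ that lies inside $w$'s half plane but outside $Q$. I would argue that the segment $\overline{zy}$ must cross the window $w$: since $z$ is inside the pocket and $y$ is outside it but still on the same side of the defining line, the only boundary of $Q$ that separates them within the half plane is the chord $w$ itself (the remaining boundary of $Q$ consists of pieces of $\partial P$, which $\overline{zy}$ cannot cross while staying inside $P$). Hence $\overline{zy}$ meets the line containing $w$ at a point, and I must show this crossing point lies strictly beyond the segment $w$ relative to the interior of $P$, i.e.\ that the segment is forced to exit $P$. Concretely, because $w$ is a window, everything on the far side of the line through $v$ and $b$ — on the pocket side — is blocked from the near side except through the ``open'' part of the window; but $z$ and $y$ are positioned so that the connecting segment is pushed against the reflex structure at $b$ (or $e$), so it must leave $P$ and re-enter, contradicting convexity of the segment's containment.

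I would formalize the obstruction as follows: the crossing of $\overline{zy}$ with the line of $w$ occurs outside the relative interior of the chord $w$, because $z$ and $y$ lie on opposite sides of at least one of the two ``walls'' that bound the pocket at its endpoints. Since $P$ is bounded near that endpoint by $\partial P$ turning away from the half plane, the segment $\overline{zy}$ cannot remain in $P$, so $z$ and $y$ do not see each other.

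The hard part will be handling the case analysis at the two endpoints of the window cleanly and ruling out the degenerate possibility that $\overline{zy}$ grazes the boundary of $Q$ exactly at $b$ or $e$; here I expect to lean on the general-position assumption stated in the model of computation and on the definition of the pocket as a relatively open or appropriately half-open region so that ``outside $Q$'' is unambiguous. A secondary subtlety is making the notion ``$w$ separates $Q$ from the rest of the half plane'' rigorous for polygons with holes, since in that setting a pocket could in principle wrap around a hole; I would argue that the pocket is by construction the connected component of $P$ cut off by $w$, so connectivity alone forces any $P$-contained path from $z$ to $y$ to cross $w$, and the sight-line being a straight segment then yields the contradiction.
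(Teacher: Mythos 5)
Your overall skeleton matches the paper's proof: view $Q\cup w$ as a closed polygon, observe that $\overline{zy}$ must cross its boundary, rule out a crossing of the window $w=\overline{be}$ itself, and conclude that the segment crosses a piece of $\partial P$ bounding the pocket, so $z$ and $y$ cannot see each other. The trouble is in the one step that does all the work and is the only place the hypothesis ``$y$ lies in $w$'s half plane'' enters: ruling out a crossing of $w$. Two concrete problems. First, your premise that $z$ and $y$ lie ``on the same side of the defining line'' is false in general: the pocket lies only \emph{locally} on the half-plane side of the line $L$ through $v$ and $b$; its boundary chain can cross $L$ beyond the end $e$ (or behind $v$), as in a spiral polygon, so $z$ may sit strictly on the far side of $L$. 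Second, your proposed contradiction---that the intersection of $\overline{zy}$ with the line containing $w$ ``lies strictly beyond the segment $w$'' because $z$ and $y$ are on ``opposite sides of at least one of the two walls'' bounding the pocket, with the segment ``pushed against the reflex structure at $b$''---is not an argument: the ``walls'' are undefined, the claim does not hold in general, and in any case what must be excluded is a crossing \emph{on} the chord $w$, not one located beyond it.

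The idea that actually closes the gap, and that the paper's terse sentence ``$\overline{yz}$ cannot intersect $\overline{be}$'' encodes, is a direction-of-crossing argument exploiting the collinearity you stated but never used: $w$ lies on the line $L$ that bounds the window half plane; the pocket lies locally on the half-plane side of $w$ (the portion of $P$ just across $w$, visible from $v$, lies on the other side---indeed, were it otherwise, points just across $w$ would be in the half plane and would see into the pocket, falsifying the lemma); and since $y$ is in the open half plane, $y\notin L$, so the segment $\overline{zy}$ meets $L$ in at most one point. Hence, assuming $\overline{zy}\subset P$, its first crossing of the boundary of $Q\cup w$ cannot be on a $\partial P$ piece (that would exit $P$), so it is on $w$; there the segment passes from the half-plane side of $L$ to the complementary side, and having spent its unique intersection with $L$ it can never return to the open half plane---contradicting the location of $y$. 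Note that this argument is insensitive to which side of $L$ the point $z$ is on, so it absorbs the wrap-around case your same-side premise misses, and the endpoint degeneracies at $b$ and $e$ you flag are likewise handled by the unique-intersection property rather than by any case analysis. Without some version of this one-crossing argument your proof does not go through.
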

\begin{proof}
The proof of \cite[Lemma 1]{bose2002} works without modification when generalizing to polygons with holes.  For completeness we restate it here.

Let $w$ have $b$ as its base and $e$ as its end.  Since $Q\cup w$ is a polygon, the line segment $\overline{zy}$ intersects $Q\cup w$ at least once.  As $\overline{yz}$ cannot intersect $\overline{be}$, it intersects some other line segment on the boundary of the polygon.  So, $y$ and $z$ are not visible with respect to the chain $Q$.
\end{proof}

\begin{lemma}\label{lem:ds_sequence}
For any vertex $v$ there are at most $2h$ T-windows of $v$.
\end{lemma}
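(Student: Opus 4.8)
The plan is to read the T-windows off the angular structure of $\mathrm{Vis}(v)$ and then bound them by a forbidden-pattern (Davenport--Schinzel) argument. First I would parametrize the boundary radially: the windows of $v$ all lie along rays from $v$ (general position makes their angles distinct), so they cut $S^1$ into finitely many maximal open arcs, and on each such arc the first point of $\partial P$ met by the ray from $v$ — i.e. the point of $\partial(\mathrm{Vis}(v))$ in that direction — lies on a single component; call its index $c\in\{0,\dots,h\}$. Since distinct components are closed and disjoint, $c$ can change only across a window, and the window there has its base on the near arc's component and its end on the far arc's component; hence it is a T-window \emph{precisely} when $c$ changes value. Listing the values of $c$ in angular order and collapsing equal neighbours, the number of T-windows of $v$ equals the length $N$ of the resulting reduced cyclic word over $\{C_0,\dots,C_h\}$ (with $N=0$ when $c$ is constant). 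It therefore suffices to bound $N$.

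The heart of the argument is to show this cyclic word has no alternation $abab$: there are no four directions, in cyclic order $\theta_1,\theta_2,\theta_3,\theta_4$, whose first-hit components are $a,b,a,b$ for distinct $a\ne b$. Suppose they existed, with first-hit points $p_1,p_3\in C_a$ and $p_2,p_4\in C_b$; each $p_i$ is visible from $v$, so the open segment $\overline{vp_i}\setminus\{p_i\}$ lies in the interior of $P$. I would form the Jordan curve $\gamma_a=\overline{vp_1}\cup\pi_a\cup\overline{p_3v}$, where $\pi_a$ is a simple arc joining $p_1$ to $p_3$ inside the connected set $C_a$. This is a genuine simple closed curve, since the two open segments lie in the interior of $P$ while $\pi_a\subseteq C_a$, so the three pieces meet only at $v,p_1,p_3$. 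Near $v$ the curve $\gamma_a$ is just the two rays in directions $\theta_1,\theta_3$, splitting a neighbourhood of $v$ into the sectors $(\theta_1,\theta_3)$ and $(\theta_3,\theta_1)$, which lie on opposite sides of $\gamma_a$. As $\overline{vp_2}$ meets $\gamma_a$ only at $v$ (its interior is free and its far endpoint lies on $C_b$, disjoint from $C_a$), the whole segment stays on one side, placing $p_2$ on the $(\theta_1,\theta_3)$-side; symmetrically $p_4$ lies on the $(\theta_3,\theta_1)$-side. Thus $\gamma_a$ separates $p_2$ from $p_4$.

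The contradiction then comes from connectivity of $C_b$. There is a path $\pi_b\subseteq C_b$ from $p_2$ to $p_4$, and since these points are separated by $\gamma_a$ the path must meet $\gamma_a$. But $\gamma_a$ is made of the two free open segments and $\pi_a\subseteq C_a$, none of which meets $C_b$: the segment interiors lie in the interior of $P$, their endpoints $v,p_1,p_3$ are not on $C_b$, and $C_a\cap C_b=\emptyset$. Hence $\pi_b$ is disjoint from $\gamma_a$, a contradiction, so $abab$ is forbidden and the reduced cyclic word is a cyclic Davenport--Schinzel sequence of order $2$. Such a sequence on $k$ symbols has length at most $2k-2$ (the linear bound $2k-1$ together with the wrap-around constraint that the first and last symbols differ); since at most $h+1$ components occur, $N\le 2(h+1)-2=2h$ (and $N=0\le 2h$ when fewer than two components appear), which is the claim.

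The main obstacle is the no-$abab$ step, and in particular making the separation rigorous: the arc $\pi_a$ may wander arbitrarily far from $v$, so one cannot argue purely locally, and $v$ itself lies on some component. The argument above sidesteps both difficulties by using only that the visibility segments have free interiors and that distinct components are disjoint and connected — exactly what makes both the Jordan-curve separation and the final crossing contradiction go through irrespective of where $v$ sits (note that connectivity of the unbounded component $C_0$ is used on equal footing with the holes). The remaining Davenport--Schinzel count is routine.
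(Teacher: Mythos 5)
Your proposal is correct and matches the paper's proof in essence: both bound the T-windows by sweeping a ray around $v$, observing that the first-hit component changes exactly at T-windows, and forbidding the alternation $abab$ via a Jordan curve built from two visibility segments and an arc through the repeated component, then invoking the order-2 Davenport--Schinzel bound over $h+1$ symbols. The only (minor) divergence is in finishing the count: the paper anchors the sweep at the external-angle bisector to get a \emph{linear} sequence of length at most $2h+1$ and hence $2h$ transitions, whereas you keep the sequence cyclic and use the bound $2k-2$ for cyclically reduced order-2 sequences---your one-line justification of that refinement (extremal linear sequences of length $2k-1$ must start and end with the same symbol) is terse but correct.
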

\begin{proof}
Consider the \emph{component sequence} of a vertex $v$ obtained as follows.  Let $r$ be a ray emanating from $v$ that bisects the external angle at $v$ (or, if $v$ is a vertex of a hole, the angle interior to the hole).  From its starting position, the $r$ rotates clockwise around $v$, making one full rotation.  Now consider the first component that is hit by $r$.  As $r$ rotates, the component hit changes exactly where there is a $T$-window of $v$.  The changing sequence of components hit by $r$ through one full rotation is the \emph{component sequence} of $v$.  This sequence starts and ends with $P_E$.

It remains to show that the component sequence of $v$ is a Davenport-Schinzel sequence of order 2; since it is a sequence over $h+1$ symbols, this implies that the sequence length is at most $2h+1$ \cite{sharir1995}, which in turn implies that there are at most $2h$ T-windows of $v$.  A Davenport-Schinzel sequence of order 2 is a sequence in which, for any two symbols $a$ and $b$, the subsequence $\dots,a,\dots,b,\dots,a,\dots,b,\dots$ does not appear.  

The rest of the proof follows a simple geometric argument illustrated in Figure \ref{fig:davenport_schinzel}.  Assume that, in the rotation of $r$, it hits $C_i$ at point $u_1$, then hits $C_j$, then hits $C_i$ at point $u_2$.  Then the union of $C_i$ and line segments $\overline{vu_1}$ and $\overline{vu_2}$, there is a Jordan curve that contains the relative interior of $C_j$ in its interior region.  Thus $C_j$ will not be hit by $r$ in its rotation after it passes $u_2$.
\qed
\end{proof}

\begin{figure}
\begin{minipage}[t]{0.36\linewidth}
\centering
\includegraphics[height=5cm]{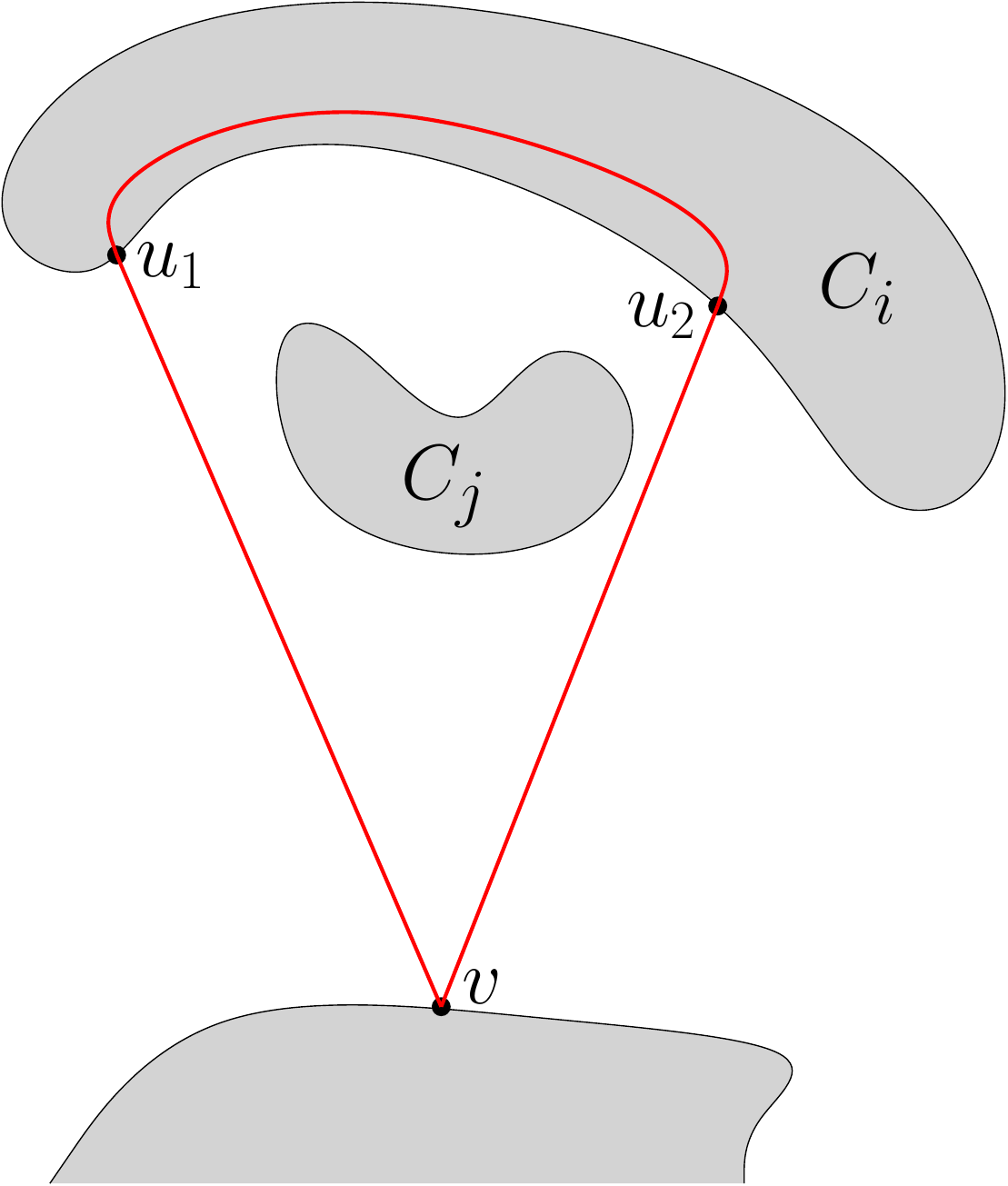}
\caption{If, in rotational order, $v$ sees $C_i$, then $C_j$, then $C_i$ again, then it will never see $C_j$ after that because $C_j$ is `trapped' in the interior of the red Jordan curve.\label{fig:davenport_schinzel}}
\end{minipage}
\hspace{0.05\linewidth}
\begin{minipage}[t]{0.56\linewidth}
\centering
\includegraphics[height=5cm]{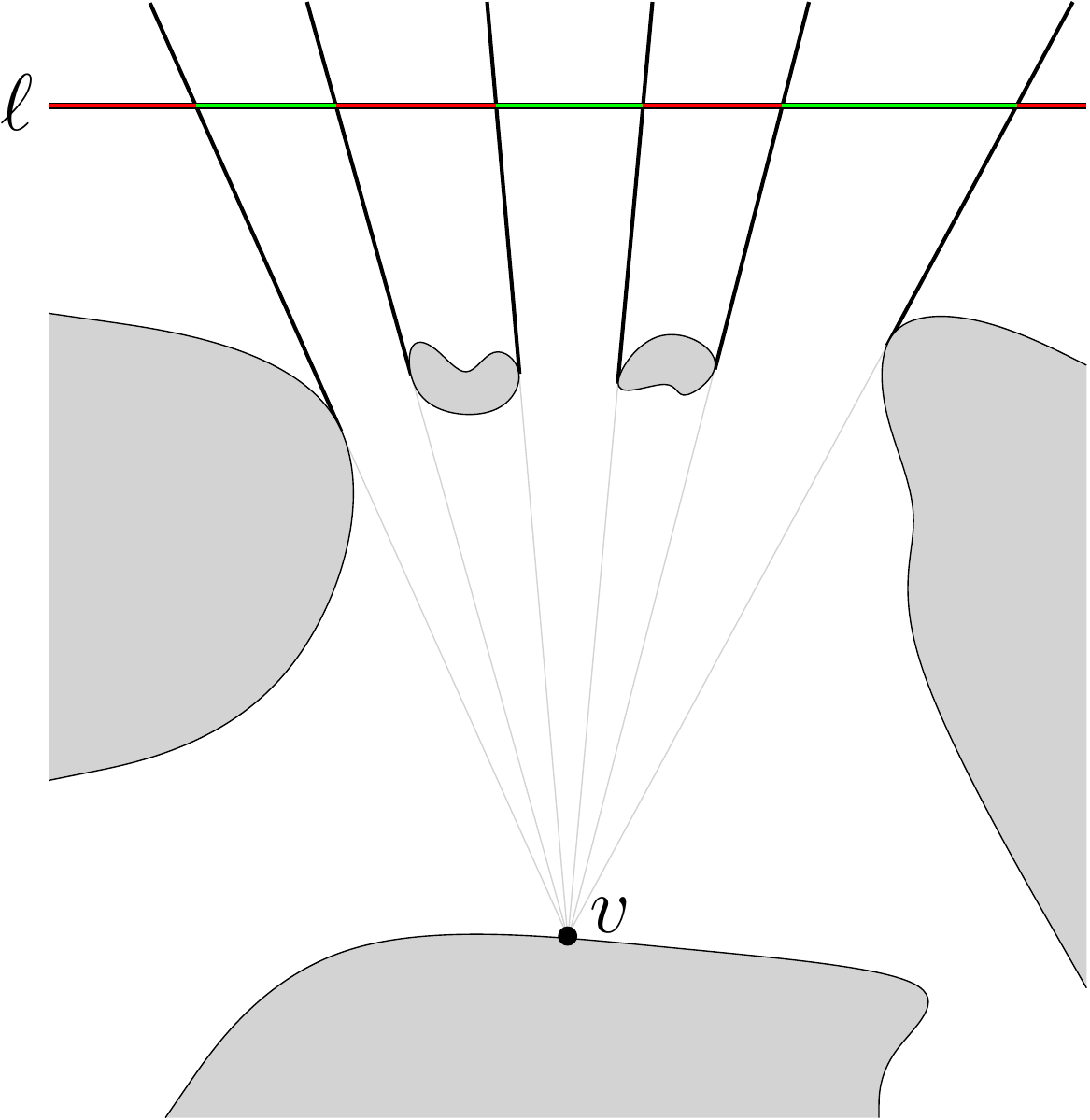}
\caption{An illustration accompanying the proof of Lemma \ref{lem:windows_intersecting_segment}.  The line segment $\ell$ is intersected by 6 windows of $v$.  Alternating invervals of visibility from $v$ are shown in red and green.  For $\ell$ to be intersected by more windows of $v$, more than 2 holes would be required.\label{fig:windows_intersecting_segment}}
\end{minipage}
\end{figure}

\begin{lemma}\label{lem:windows_intersecting_segment}
For a polygon $P$ with $h$ holes, a fixed vertex $v$, and a fixed line segment $\ell$ in $P$, $\ell$ intersects at most  $2(h+1)$ windows of $v$: one right window, one left window, and $2h$ T-windows.
\end{lemma}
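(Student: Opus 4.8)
The plan is to partition the windows of $v$ that meet $\ell$ by type and bound each type separately, using the two preceding lemmas. First I would record a structural dictionary. Every window of $v$ is a segment of $\partial(\mathrm{Vis}(v))$ whose base and end are collinear with $v$, and $\partial(\mathrm{Vis}(v))$ consists only of such windows together with pieces of $\partial P$. Since $\ell\subseteq P$, it cannot cross a piece of $\partial P$ transversally without leaving $P$; hence every point at which $\ell$ passes between a visible and an invisible region lies on a window of $v$, and, conversely, each transversal crossing of a window toggles visibility from $v$. Thus the windows met by $\ell$ are exactly the transitions between the alternating visible and invisible intervals drawn in Figure~\ref{fig:windows_intersecting_segment}, and $\ell$ meets each such window at most once.

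The T-windows cost nothing extra. By Lemma~\ref{lem:ds_sequence}, $v$ has at most $2h$ T-windows in all, so $\ell$ meets at most $2h$ of them irrespective of its position. It therefore remains to show that $\ell$ meets at most one left window and at most one right window; by the left--right symmetry of the definitions it suffices to treat left windows.

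Here I would argue by contradiction, adapting the pocket argument of Bose \etal~\cite{bose2002} and relying on Lemma~\ref{lem:bose1}. Note first that the dictionary above does not by itself bound the number of left crossings, since left and right crossings may freely interleave along $\ell$; the bound is genuinely geometric. Suppose then that $\ell$ crossed two left windows $w_1$ and $w_2$, with respective bases $b_1,b_2$ and pockets $Q_1,Q_2$, ordered so that the oriented line from $v$ through $b_1$ comes before that through $b_2$ in the rotation of a ray about $v$. Because $b_1$ and $b_2$ are reflex vertices seen by $v$, neither base lies in the other's pocket; in particular $b_2$ lies in the half plane of $w_1$ but outside $Q_1$. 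I would use this, together with the common (left) orientation of the two windows, to show that the pocket $Q_2$ is nested inside $Q_1$ near the crossing. The intended contradiction is then supplied by Lemma~\ref{lem:bose1}: once $\ell$ has crossed $w_1$ into $Q_1$, a point of $\ell$ lying on the visible side of $w_2$ would have to lie inside $Q_1$ and hence be invisible from $v$, yet such a point, being adjacent to the window $w_2$, belongs to $\mathrm{Vis}(v)$ and is visible from $v$ -- a contradiction. Summing the three bounds yields $1+1+2h=2(h+1)$.

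I expect this last step to be the real obstacle. The delicate points are establishing the precise nesting of the two left pockets and keeping the half-plane orientations (left of the line through $b_1$ versus left of the line through $b_2$) straight; this is exactly where an argument of this shape is most error-prone. A secondary point, needed so that the simple-polygon reasoning of Bose \etal~transfers verbatim, is to confirm that the half-plane and pocket structure behind Lemma~\ref{lem:bose1} remains confined to a single component $C_i$ in the presence of holes, the trans-component interactions having already been absorbed into the $2h$ T-window count.
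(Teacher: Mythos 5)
Your bound on T-windows matches the paper exactly (Lemma~\ref{lem:ds_sequence} caps their total number at $2h$), and the observation that $\ell$ meets each window at most once is sound. The gap is in your treatment of left windows, and it is twofold. First, the intermediate claim you flag as delicate---that $Q_2$ is nested inside $Q_1$ near the crossing---is not merely hard to establish; it is false. Every window of $v$ is an edge of $\mathrm{Vis}(v)$ and hence consists of points visible from $v$, while the interior of a pocket of $v$ contains no point visible from $v$; consequently no window of $v$, and a fortiori no pocket of $v$, can lie inside another pocket of $v$: pockets of a single vertex have pairwise disjoint interiors. The containment you are importing from Lemma~\ref{lem:bose8} is a statement about \emph{window half planes}, not pockets, and it is used there to conclude mutual invisibility of two pockets, not nesting. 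Relatedly, ``$b_2$ lies in the half plane of $w_1$'' does not follow from ``$b_2\notin Q_1$'': the half plane is only one of the two sides of the line through $v$ and $b_1$, and nothing forces $b_2$ onto it.

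Second, and more fundamentally, even the salvageable local contradiction (a point of $w_2$ cannot lie in the interior of $Q_1$, since $w_2\subset\mathrm{Vis}(v)$, and $\ell$ can leave $Q_1$ only through $w_1$, which it crosses once) rules out only those configurations in which $\ell$ sits inside a pocket strictly between its two left-window crossings. It cannot exclude the configuration in which $\ell$ starts inside $Q_1$, exits through $w_1$, and later enters $Q_2$ through $w_2$, ending inside $Q_2$: this is fully consistent with disjoint pockets, single crossings, and Lemma~\ref{lem:bose1}. Excluding it is precisely where the paper's proof does its work, by an ingredient absent from your proposal: parametrize $\ell$ by a moving point $p$ chosen so that $\overline{vp}$ rotates monotonically, say clockwise, about $v$ (possible since $\ell$ is a straight segment, and degenerate if $\ell$ is collinear with $v$, in which case no window is crossed transversally). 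Because the pocket of a left window lies locally to the left of the oriented line from $v$ through its base, a clockwise-moving $p$ can only \emph{exit} a left pocket and never enter one, and symmetrically can only \emph{enter} a right pocket and never exit one. Hence any left pocket that is exited must contain the start point of $\ell$, and disjointness allows at most one such pocket; the right-window bound follows symmetrically. You should replace the nesting argument with this monotone-rotation sweep; without it the exit-then-enter case is a genuine hole.
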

\begin{proof}
See Figure \ref{fig:windows_intersecting_segment}.  Of the windows of $p$ that cross $\ell$, at most one is a right window and at most one is a left window.  To see this, consider a point $p$ moving from one endpoint of $\ell$ to the other, directed so that line segment $\overline{vp}$ rotates clockwise around $v$ as $p$ moves.  $p$ can exit at most one left pocket of $v$ and cannot enter a left pocket of $v$.  $p$ can enter at most one right pocket of $v$ and cannot exit a right pocket of $v$.  At most $2h$ T-windows of $p$ cross $\ell$ because, by Lemma \ref{lem:ds_sequence}, there are at most $2h$ T-windows of $p$.
\qed\end{proof}

\begin{theorem}\label{thm:cells}
For a polygon $P$ with $h$ holes, $\DDD_V(P,V(P))$ contains $\BIGO{(h+1)n^3}$ cells.
\end{theorem}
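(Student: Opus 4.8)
The plan is to view $\DDD_V(P,V(P))$ as the planar subdivision induced by an arrangement of line segments---the windows of all vertices together with the edges of $\partial P$---and to bound the number of cells (faces) via Euler's formula. For any arrangement of $m$ segments with $I$ interior intersection points one has $V = \BIGO{m+I}$ vertices and $E = \BIGO{m+I}$ edges, so the number of faces is $F = \BIGO{m+I}$. Thus the task reduces to bounding two quantities: the total number of windows, and the total number of pairwise intersections among them.

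First I would bound the number of windows. Since the windows of a single vertex $v$ are exactly the chords appearing on the boundary of $\mathrm{Vis}(v)$, and the visibility polygon of a vertex in a polygon with $h$ holes has combinatorial complexity $\BIGO{n}$, each vertex contributes only $\BIGO{n}$ windows. Summing over the $n$ vertices gives $\BIGO{n^2}$ windows in total; together with the $\BIGO{n}$ edges of $\partial P$ this yields $m = \BIGO{n^2}$ segments in the arrangement.

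The heart of the argument is bounding the number of intersection points, and this is exactly where Lemma \ref{lem:windows_intersecting_segment} does the work. Fix a window $w$ and treat it as the fixed segment $\ell$ in the statement of that lemma. For each vertex $v$, the lemma guarantees that $w$ is crossed by at most $2(h+1)$ windows of $v$. Summing over all $n$ vertices, $w$ is crossed by at most $\BIGO{(h+1)n}$ windows. Summing this bound over all $\BIGO{n^2}$ windows and dividing by two to avoid double-counting gives $I = \BIGO{(h+1)n^3}$ interior intersection points. Substituting $m=\BIGO{n^2}$ and $I=\BIGO{(h+1)n^3}$ into the arrangement bound $F=\BIGO{m+I}$ produces the claimed $\BIGO{(h+1)n^3}$ cells; note that setting $h=0$ recovers the $\BIGO{n^3}$ bound of Bose \etal{} for simple polygons.

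The one genuinely substantial ingredient is Lemma \ref{lem:windows_intersecting_segment}, which is already established; granting it, the remaining steps are essentially routine counting. The point that requires the most care is the first step---confirming that each vertex has only $\BIGO{n}$ windows, and that the windows together with the polygon edges form a well-behaved arrangement of segments---so that Euler's formula applies cleanly and the $\BIGO{(h+1)n^3}$ intersection count genuinely dominates the $\BIGO{n^2}$ segment count in the face bound.
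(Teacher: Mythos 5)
Your proposal is correct and follows essentially the same route as the paper: the same key Lemma \ref{lem:windows_intersecting_segment} bounding crossings per window by $2(h+1)$ per vertex, the same $\BIGO{n^2}$ total window count, and the same conclusion via Euler's formula, giving $\BIGO{(h+1)n^3}$ intersections and hence cells. You simply spell out the routine steps (the $\BIGO{n}$ windows per vertex from visibility-polygon complexity, and the arrangement face bound $F = \BIGO{m+I}$) that the paper's one-paragraph proof leaves implicit.
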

\begin{proof}
By Lemma \ref{lem:windows_intersecting_segment}, each window is crossed by at most $2(h+1)$ windows of each vertex, and thus $2(h+1)n$ other windows in total.  The total number of windows is $\BIGO{n^2}$, so the number of points at which windows cross is $\BIGO{(h+1)n^3}$.  The theorem follows by applying Euler's formula for planar graphs.
\qed\end{proof}

\subsection{Constructing and Using the Decomposition}

Chazelle and Edelsbrunner \cite{chazelle1992} give an efficient algorithm for finding the planar decomposition defined by a set $\LLL$ of line segments.  The running time is $\BIGO{|\LLL|\log |\LLL| + k}$, where $k$ is the number of intersections.  In our case, $|\LLL|$ is $\BIGO{n^2}$ and Theorem \ref{thm:cells} tells us that $k$ is $\BIGO{(h+1)n^3}$, so the total runtime for this step is $\BIGO{(h+1)n^3}$.

The visibility decomposition $\DDD_V(P,V(P))$ has an associated planar graph $G$.  We are particularly interested in a customized graph dual $G'$ of $G$ that we can use to find cells of minimal visibility.  To obtain $G'$, we first build the dual graph of $G$.  We then remove all edges corresponding to segments of $\partial P$.  This gives us the underlying graph of $G'$, and we will direct the edges as follows.

Each edge of $G$ that does not correspond to part of $\partial P$ separates two cells; thus each edge of $G'$ connects two cells.  Consider an edge $e$ of $G'$ that connects two adjacent cells, or faces, $f_i$ and $f_j$.  If $V_i$ and $V_j$ are the respective sets of vertices that see points in $f_i$ and $f_j$, then either $V_i$ can be obtained from $V_j$ by removing one vertex or vice versa.  We direct $e$ towards the face with the smaller corresponding set, \ie, towards the cell that is less visible.  Doing this for all edges we obtain a directed acyclic graph.  The sinks of this graph correspond to the cells of minimal visibility in $\DDD_V(P,V(P))$.

\begin{figure}
\centering
\includegraphics[width=\linewidth,height=\textheight]{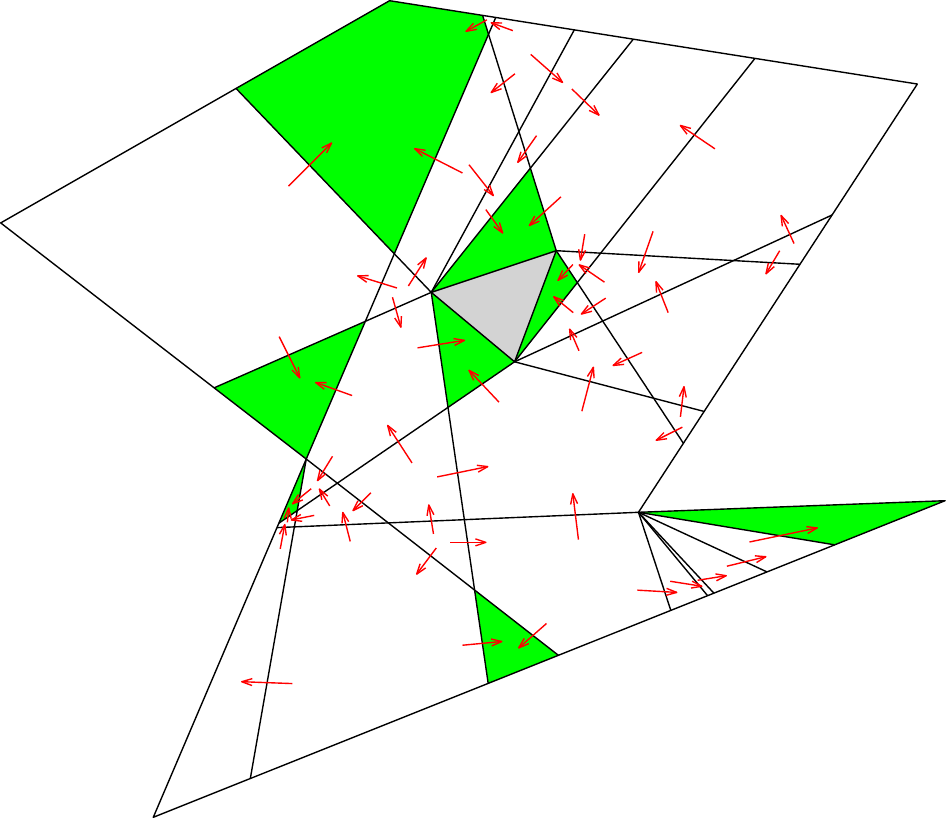}
\caption{The visibility decomposition of a polygon with its directed dual edges indicated in red and its sinks, \ie, cells of minimal visibility, shaded green.\label{fig:decomposition_4}}
\end{figure}

Bose \etal{} used this technique and showed that, given the planar decomposition of a simple polygon $P$, in $\BIGO{n^3}$ time it is possible to construct $G'$ and identify all sinks.  This construction extends naturally to polygons with holes in time $\BIGO{(h+1)n^3}$.

It is worth noting that Bose \etal{} used the planar decomposition algorithm of Bentley and Ottmann \cite{bentley1979}, which in this application is slower by a factor of $\log n$.  They do this because they claim the algorithm of Chazelle and Edelsbrunner \cite{chazelle1992} requires the line segments to be in general position (they will not necessarily be in general position even if the polygon's vertices are in general position).  However, we have looked for and failed to find any original mention of this general position requirement.  At any rate, degeneracies will only occur at the vertices of $P$, and an efficient workaround is possible.

\subsection{Bounding the Number of Sinks}

Here we give an upper bound of $\BIGO{(h+1)^2n^2}$ on the number of sinks.  This generalizes the $\BIGO{n^2}$ bound given by Bose \etal{} for simple polygons.

\begin{lemma}[\cite{bose2002}]\label{lem:bose8}
Given two right pockets $p_1$ and $p_2$, of a point $x$, no point in $p_1$ can see a point in $p_2$.
\end{lemma}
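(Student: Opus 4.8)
The plan is to reduce the statement to Lemma~\ref{lem:bose1}. Let $w_1$ and $w_2$ be the right windows bounding $p_1$ and $p_2$, with bases $b_1,b_2$ and ends $e_1,e_2$. Each window is the radial cut obtained by shooting a ray from $x$ through its base, so $w_1$ and $w_2$ lie on two rays emanating from $x$; write $H_1$ and $H_2$ for the corresponding right window half planes. The heart of the argument is the structural claim that at least one of the containments $p_1\subseteq H_2$ or $p_2\subseteq H_1$ holds; relabel so that it is $p_1\subseteq H_2$. Granting this, the rest is immediate: distinct pockets are disjoint, being distinct connected components of $P\setminus\mathrm{Vis}(x)$, so every point of $p_1$ lies in $H_2\setminus p_2$, and Lemma~\ref{lem:bose1}, applied to the window $w_2$ and its pocket $p_2$, states that no point of $p_2$ is visible to any point of $H_2\setminus p_2$. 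This is exactly the assertion that no point in $p_1$ sees a point in $p_2$.

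It therefore remains to prove the containment claim, which is where the work lies. The fact to exploit is that $p_1$ and $p_2$ are \emph{both} right pockets, so each hangs off the clockwise side of its radial window and is contained in the closed half plane on that side; in particular, the only extreme of $p_1$ on its counterclockwise side is the window $w_1$ itself. Ordering the two rays around $x$ and taking $w_1,w_2$ to bound the smaller of the two angular gaps, the counterclockwise extreme of $p_1$ already lies inside $H_2$, so the only way $p_1$ could escape $H_2$ is for its bounding arc of $\partial P$ (the arc running from $e_1$ back to $b_1$) to wrap around on the clockwise side until it crosses the line through $x$ and $b_2$ on the ray opposite $b_2$. I would rule this out with a Jordan-curve/separation argument in the spirit of the proof of Lemma~\ref{lem:ds_sequence}: such a crossing, together with $w_2$ and the segment $\overline{xb_2}$, would trap the occluding structure of $w_2$ on the wrong side, contradicting either the disjointness of the two pockets or the fact that $p_2$ is a \emph{right} pocket of $x$.

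The main obstacle is precisely this containment step, not the reduction, which is a one-line application of Lemma~\ref{lem:bose1}. The difficulty is that the two right window half planes are not nested, so the containment does not follow formally from $p_1\subseteq\overline{H_1}$ alone; one must use the genuinely bounded angular extent of a right pocket together with the fact that $e_2$ lies on $\partial P$. I expect a short geometric argument once the correct clockwise/counterclockwise labeling of the two pockets is fixed, closely following the original treatment of Bose~\etal.
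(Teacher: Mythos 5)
There is a genuine gap, and it sits exactly where you located it: the containment claim ``$p_1\subseteq H_2$ or $p_2\subseteq H_1$'' is never proved, and your sketch for it rests on a premise---the ``genuinely bounded angular extent of a right pocket''---that is not actually available. A pocket is bounded by its window and an arbitrary chain of $\partial P$, and in a spiral-shaped polygon (or one with holes) that chain can wrap far around $x$, carrying the pocket outside any fixed half plane through $x$; so pocket-level containment is at best delicate and quite possibly false, and your Jordan-curve patch, which is only gestured at (``contradicting either the disjointness \ldots or the fact that $p_2$ is a right pocket''), does not close it.

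The paper avoids this entirely by using a much weaker containment that is immediate: both windows lie on rays emanating from the \emph{common} point $x$, and both half planes are \emph{right} half planes of those rays, so a one-line orientation argument shows that one half plane must contain the \emph{window} (not the pocket) of the other---if the ray through the base of $w_2$ is not in the closed half plane $H_1$, then the ray through the base of $w_1$ lies in $H_2$. Say $w_2\subseteq H_1$. Now suppose $a\in p_1$ sees $b\in p_2$; the segment $\overline{ab}$ must cross $w_2$ to enter $p_2$, and the crossing point $y$ lies on $w_2\subseteq H_1$, is visible from $x$ (all window points are, while pocket points are not), hence lies outside $p_1$, and yet sees $a\in p_1$---contradicting Lemma~\ref{lem:bose1} applied to $w_1$ and $p_1$. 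In short: your reduction to Lemma~\ref{lem:bose1} is the right instinct, but it should be applied at the crossing point of the candidate visibility segment with the contained window, which makes the hard step of your plan (pushing an entire pocket into the other half plane) unnecessary.
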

\begin{proof}
The proof of \cite[Lemma 8]{bose2002} works without modification when generalizing to polygons with holes.  For completeness we restate it here.

If a point $a$ in $p_1$ could see a point $b$ in $p_2$, then the line segment between them must intersect both the window of $p_1$ and $p_2$.  Consider the window half planes of $p_1$ and $p_2$.  Since the two half planes intersect at $x$ and both windows are right windows, one of the half planes must contain the windows of the other.  The lemma follows from Lemma \ref{lem:bose1}.
\qed\end{proof}

\begin{lemma}[\cite{bose2002}]\label{lem:bose9}
There is at most one point of intersection between all the right windows of a point $x$ and all the right windows of a distinct point $y$.
\end{lemma}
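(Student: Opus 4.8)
The plan is to argue by contradiction: suppose the right windows of $x$ and the right windows of $y$ meet in two distinct points $q_1$ and $q_2$. First I would exploit the fact that every window of a point $p$ is \emph{radial}, lying along the ray from $p$ through its base, so each ray from $p$ carries at most one window of $p$. In particular two distinct right windows of the same point never cross, and a single right window of $x$ meets a single right window of $y$ in at most one point (two non-collinear segments cross once, and general position excludes collinearity). Hence $q_1$ and $q_2$ lie on two \emph{distinct} right windows $w_1\ne w_2$ of $x$ and on two \emph{distinct} right windows $w_1'\ne w_2'$ of $y$; consequently $q_1,q_2$ lie on the bounding windows of two distinct right pockets $Q_1,Q_2$ of $x$, and likewise of two distinct right pockets $Q_1',Q_2'$ of $y$.

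The engine for the contradiction is Lemma \ref{lem:windows_intersecting_segment}: any line segment contained in $P$ meets at most one right window of a given point. Thus it suffices to produce a single segment lying in $P$ and passing through both $q_1$ and $q_2$, since such a segment would meet the two distinct right windows $w_1$ and $w_2$ of $x$, a contradiction. Equivalently, it suffices to show that $q_1$ and $q_2$ \emph{see each other}: this alone contradicts Lemma \ref{lem:windows_intersecting_segment}, and it also contradicts Lemma \ref{lem:bose8} directly, because $q_1$ and $q_2$ sit on distinct right pockets of $x$. So the entire proof reduces to establishing mutual visibility of the two crossing points.

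To reach that, I would first pin down the positions of the viewpoints relative to the opposing windows. Each viewpoint sees points that, just beyond the relevant crossing, lie strictly inside a right pocket of the \emph{other} viewpoint (the portion of $w_i$ immediately past $q_i$ enters the window half plane of $w_i'$). Feeding this into Lemma \ref{lem:bose1}, which forbids a point in a pocket from seeing a point in the window's half plane outside that pocket, constrains where $x$ and $y$ can lie relative to the four radial windows. Combined with the fixed handedness of a right window---its pocket always lies on the right of its radial direction---these constraints determine the cyclic order of the four directions $x\to q_1$, $x\to q_2$, $y\to q_1$, $y\to q_2$ around the two viewpoints, and I would use that order to route a segment between $q_1$ and $q_2$ that stays inside $P$.

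The main obstacle is precisely this last routing step. Mutual visibility of two points each seen by both $x$ and $y$ is \emph{not} automatic in a non-convex polygon, so the argument must genuinely use the right-window orientation rather than mere visibility: the content is to show that the consistent handedness of right--right crossings prevents the connecting segment from escaping through $\partial P$, thereby excluding the second crossing. I expect this to require a careful but elementary case analysis on the cyclic order above, invoking Lemma \ref{lem:bose1} to discard exactly those orderings in which some right pocket would have to be seen from outside itself.
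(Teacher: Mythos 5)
There is a genuine gap here, and you have named it yourself: the entire content of the lemma is concentrated in your final ``routing step,'' which you do not carry out. Your plan is to derive a contradiction by showing that two crossing points $q_1$ and $q_2$ must see each other, but you give no argument that the segment $\overline{q_1q_2}$ stays inside $P$---you only say you ``expect'' a case analysis on the cyclic order of four directions to work. Nothing in the constraints you list rules out $\partial P$ or a hole obstructing $\overline{q_1q_2}$; indeed, since $q_1$ and $q_2$ would lie on the boundaries of two distinct right pockets of $x$, Lemma \ref{lem:bose8} suggests they typically would \emph{not} see each other, so you would be forcing a configuration whose impossibility is precisely what must be proved, with no mechanism supplied for doing so. A secondary slip: from the existence of two intersection points it does not follow, as your ``hence'' claims, that \emph{both} the two windows of $x$ and the two windows of $y$ involved are distinct; $q_1,q_2$ could lie on a single right window $w$ of $x$ and on two different right windows of $y$. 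That case happens to be the easy one---apply Lemma \ref{lem:windows_intersecting_segment} with $\ell$ equal to the subsegment of $w$ between $q_1$ and $q_2$, which lies in $P$ and meets two right windows of $y$---but your argument as written skips it.

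For comparison, the paper's proof avoids mutual visibility of crossing points altogether, and its structure shows how to close your gap. The pivot is that any crossing of a right window of $x$ with a right window of $y$ lies on $\partial\mathrm{Vis}(x)\cap\partial\mathrm{Vis}(y)$, hence sees both $x$ and $y$. The proof then splits on whether $x$ lies in a right pocket of $y$. If it does, Lemma \ref{lem:bose8} confines every such crossing to the single right window $w$ of $y$ bounding that pocket, and Lemma \ref{lem:windows_intersecting_segment} applied with $\ell = w$ itself (a segment in $P$) shows at most one right window of $x$ meets $w$; no auxiliary segment through two crossing points is ever constructed. If $x$ is not in a right pocket of $y$, then at any crossing the base of $y$'s window must lie in the pocket of $x$'s window, placing $y$ in that window's half plane; since $y$ sees the crossing point, Lemma \ref{lem:bose1} forces $y$ into that right pocket of $x$, and the first case applies with the roles of $x$ and $y$ exchanged. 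If you wish to salvage your attempt, abandon routing $\overline{q_1q_2}$ and instead localize all crossings onto one window in this way.
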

\begin{proof}
The proof of \cite[Lemma 9]{bose2002} works without modification when generalizing to polygons with holes.  For completeness we restate it here.

Note that if a right window of $x$ intersects a right window of $y$, then both $x$ and $y$ are visible from the intersection point.  Also, from the fact that both are right windows, the base of one window must be contained in the pocket of the other.  There are two cases to consider: either $x$ is contained in a right pocket of $y$ or $x$ is not contained in a right pocket of $y$.  We start with the former.

If $x$ is contained in a right pocket of $y$, then by Lemma \ref{lem:bose8} $x$ cannot see any other right window of $y$, and by Lemma \ref{lem:windows_intersecting_segment}, only one right window of $x$ can intersect the right window of $y$'s right pocket containing $x$.  Therefore, the lemma follows in this case.

Assume that $x$ is not contained in a right pocket of $y$.  Suppose that a right window $r_1$ of $x$ intersects a right window $r_2$ of $y$.  Since $x$ is not conatained in a right pocket of $y$, the base of $r_2$ must be contained in the pocket of $r_1$.  This implies that $y$ is in the window half plane of $r_1$.  However, since $y$ is visible from the intersection point of $r_1$ and $r_2$, $y$ must be in the pocket of $r_1$ by Lemma \ref{lem:bose1}.  Therefore, the lemma follows since $y$ is contained in a right pocket of $x$.
\qed\end{proof}

\begin{lemma}\label{lem:T-windows_and_right_windows}
For distinct vertices $v_i$ and $v_j$, if $v_i$ is not in a right pocket of $v_j$ then any T-window of $v_i$ crosses at most one right window of $v_j$.
\end{lemma}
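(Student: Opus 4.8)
The plan is to reduce the statement to Lemma \ref{lem:windows_intersecting_segment}. A T-window of $v_i$ is, by construction, a single straight segment: it lies along the ray from $v_i$ through its base and runs from the base to the end, with its relative interior in the interior of $P$. Hence it qualifies as ``a fixed line segment $\ell$ in $P$'' in the sense of Lemma \ref{lem:windows_intersecting_segment}. Taking $\ell$ to be the T-window in question and the fixed vertex to be $v_j$, that lemma already asserts that $\ell$ is crossed by at most one right window of $v_j$, which is exactly the desired conclusion.

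Because this reduction does not visibly invoke the hypothesis that $v_i$ lies outside every right pocket of $v_j$, I would also prepare a direct geometric argument in the style of Lemma \ref{lem:bose9}, both to expose where the hypothesis is used and to keep the case analysis of the surrounding lemmas uniform. Suppose, for contradiction, that a single T-window $t$ of $v_i$ crosses two distinct right windows $w_1$ and $w_2$ of $v_j$, at points $q_1$ and $q_2$ respectively. Since $t$ lies on the ray from $v_i$ through its base, the points $v_i,q_1,q_2$ are collinear; and since $q_k$ lies on the right window $w_k$, each $q_k$ is visible from $v_j$ and lies on the ray from $v_j$ through the base of $w_k$. I would then track, as a point moves along $t$ starting from its base (the endpoint nearer $v_i$), its membership in the right pockets $P_1$ and $P_2$ of $w_1$ and $w_2$: crossing $w_k$ toggles membership in $P_k$, while Lemma \ref{lem:bose8} forbids being inside $P_1$ and $P_2$ simultaneously and Lemma \ref{lem:bose1} controls visibility across a pocket boundary.

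The hypothesis enters at the starting point of this traversal: because $v_i$ is outside every right pocket of $v_j$, and the direction from $v_j$ to a point sweeping along $t$ rotates monotonically (the line containing $t$ misses $v_j$ by general position, since it passes through the two vertices $v_i$ and the base), the traversal begins outside all right pockets and can thereafter enter at most one of them without ever exiting. Crossing two distinct right windows would force either a second pocket entry or a pocket exit, contradicting this monotonicity. I expect the main obstacle to be exactly this monotonicity bookkeeping: verifying that the window's base does not already lie inside a right pocket of $v_j$ (so that the ``starting side'' inferred from $v_i$ is the correct one), and handling any degenerate crossings at the shared endpoints, where the general-position and non-degeneracy assumptions on $P$ do the remaining work.
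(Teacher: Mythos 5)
Your proposal is correct, and your primary argument takes a genuinely different (and shorter) route than the paper. The paper's proof is a one-liner from Lemma \ref{lem:bose8}: the segment from $v_i$ through the T-window lies in $P$, so all its points see one another along the segment, and hence a ray shot from $v_i$ cannot leave one right pocket of $v_j$ and enter another; since the hypothesis places $v_i$ outside every right pocket, the ray makes at most one pocket entry and thus at most one right-window crossing. You instead observe that a T-window of $v_i$ is itself ``a fixed line segment $\ell$ in $P$'' and invoke Lemma \ref{lem:windows_intersecting_segment} with $v=v_j$, whose conclusion already caps the right windows of $v_j$ crossing $\ell$ at one. This reduction is sound---the paper itself applies Lemma \ref{lem:windows_intersecting_segment} to a window (of $x$, crossing a window of $y$) in the proof of Lemma \ref{lem:bose9}, so taking $\ell$ to be a T-window is consistent with that lemma's intended generality---and your side observation is also right: the clockwise-rotation argument behind Lemma \ref{lem:windows_intersecting_segment} is insensitive to where the traversal starts (a clockwise crossing of a right window is always an entry, never an exit, so a traversal beginning inside a right pocket crosses no right window at all), which makes the hypothesis that $v_i$ avoids the right pockets of $v_j$ logically superfluous for the stated conclusion; it appears in the lemma only because that is how it is invoked in Theorem \ref{thm:sinks}. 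What your reduction buys is brevity and a strictly stronger statement; what the paper's route buys is an explicit geometric mechanism (disjoint right pockets are mutually invisible) stated at the same level as Lemmas \ref{lem:bose8} and \ref{lem:bose9}. Your backup direct argument is essentially the paper's proof fleshed out, and the obstacle you flag---that the T-window's base, where your traversal starts, might itself lie in a right pocket of $v_j$---is resolved exactly by the paper's phrasing ``a ray shot from $v_i$'': prepend the segment $\overline{v_i b}\subset P$ from $v_i$ to the base $b$, so that the traversal starts at $v_i$, which the hypothesis places outside every right pocket of $v_j$.
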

\begin{proof}
By Lemma \ref{lem:bose8}, a ray shot from $v_i$ cannot leave a right pocket of $v_j$ and enter another right pocket of $v_j$.  The lemma follows easily.
\qed\end{proof}

\begin{corollary}\label{lem:TT_sinks}
The total number of crossings between T-windows is $\BIGO{(h+1)^2n^2}$.
\end{corollary}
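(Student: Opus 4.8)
The plan is to bound the number of crossings between T-windows by combining the per-vertex bound on T-windows (Lemma \ref{lem:ds_sequence}) with a per-pair interaction bound analogous to the right-window argument of Lemma \ref{lem:bose9}. The corollary statement counts \emph{all} crossings between T-windows over the whole decomposition, so I would first reduce the global count to a sum over pairs of vertices, and then establish that each pair of distinct vertices contributes only $\BIGO{(h+1)^2}$ crossings between their respective T-windows.

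First I would fix two distinct vertices $v_i$ and $v_j$ and count how many times a T-window of $v_i$ can cross a T-window of $v_j$. By Lemma \ref{lem:ds_sequence}, $v_i$ has at most $2h$ T-windows and $v_j$ has at most $2h$ T-windows. The crude product bound already gives $\BIGO{h^2}$ crossings per pair, and summing over the $\BIGO{n^2}$ pairs of vertices yields $\BIGO{h^2 n^2} = \BIGO{(h+1)^2 n^2}$ total crossings. This is essentially the whole argument: the per-vertex T-window bound is the nontrivial input (it is where the Davenport-Schinzel order-2 structure was used), and once we have at most $2h$ T-windows per vertex, the counting is immediate. So the skeleton is (i) at most $2h$ T-windows per vertex by Lemma \ref{lem:ds_sequence}; (ii) hence at most $(2h)^2 = \BIGO{(h+1)^2}$ pairwise crossings between the T-windows of any two fixed vertices; (iii) sum over all $\binom{n}{2} = \BIGO{n^2}$ unordered pairs.

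I would also want to confirm that two line segments (a T-window of $v_i$ and a T-window of $v_j$) cross at most once, which is trivial for straight segments in general position, so that counting pairs of windows really does count crossings. With that, step (ii) is a clean product bound rather than anything requiring a Bose-\etal-style half-plane argument, so I do not expect to need the sharper ``at most one intersection'' refinements (Lemmas \ref{lem:bose8} and \ref{lem:bose9}) here; those are relevant for bounding right-window/right-window crossings, not T-window/T-window crossings.

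The main obstacle, if there is one, is purely bookkeeping: making sure the count is over crossings and not over ordered pairs (a constant factor of $2$), and confirming that no crossing is double-counted when the same geometric point is shared by more than two windows --- but general position of the polygon's vertices (assumed in the model of computation) rules out three windows meeting at a common interior point, so each crossing is charged to exactly one pair of windows. Given these, the bound $\BIGO{(h+1)^2 n^2}$ follows directly from multiplying the $\BIGO{(h+1)^2}$ per-pair bound by the $\BIGO{n^2}$ pairs, and no delicate geometric case analysis beyond Lemma \ref{lem:ds_sequence} is needed.
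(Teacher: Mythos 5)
Your proof is correct and matches the paper's intent: the paper states this corollary without any proof, leaving it as an immediate consequence of Lemma \ref{lem:ds_sequence} (at most $2h$ T-windows per vertex, hence $\BIGO{(h+1)n}$ T-windows in total, any two segments crossing at most once), which is exactly the product-bound argument you give. You also correctly observe that the sharper one-intersection machinery of Lemmas \ref{lem:bose8} and \ref{lem:bose9} is not needed here.
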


\begin{theorem}\label{thm:sinks}
For a polygon $P$ with $h$ holes, $\DDD_V(P,V(P))$ contains $\BIGO{(h+1)^2n^2}$ cells of minimal visibility, and this is tight in the worst case.
\end{theorem}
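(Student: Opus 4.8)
The plan is to prove the two halves separately: the upper bound $\BIGO{(h+1)^2n^2}$ on the number of sinks, and a matching lower-bound construction.

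For the upper bound I would first characterize sinks combinatorially. A cell $f$ is a sink exactly when, for every window bounding $f$, the cell $f$ lies on the invisible side of that window (so crossing any bounding edge strictly enlarges the visible vertex set). Consequently, at each corner of $f$---a transversal crossing of two bounding windows $w_a$ and $w_b$ of vertices $a$ and $b$---the cell $f$ is the unique one among the (up to) four incident cells that sees neither $a$ nor $b$. This lets me charge each sink to one of its corners, with a given crossing charged by at most one sink (its unique ``sees-neither'' cell). It then suffices to bound the number of crossings that can serve as such a corner, classified by the window types meeting there. Crossings of two T-windows number $\BIGO{(h+1)^2n^2}$ by Corollary \ref{lem:TT_sinks}. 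Crossings of two right windows number $\BIGO{n^2}$, since by Lemma \ref{lem:bose9} any fixed pair of vertices contributes at most one; left--left crossings are $\BIGO{n^2}$ symmetrically. For a T-window of $v_i$ meeting a right window of $v_j$, Lemma \ref{lem:T-windows_and_right_windows} shows that (when $v_i$ is not in a right pocket of $v_j$) each of the at most $2h$ T-windows of $v_i$ given by Lemma \ref{lem:ds_sequence} meets at most one right window of $v_j$; summing over the $\BIGO{n^2}$ ordered pairs and treating the pocket-containment case separately gives $\BIGO{(h+1)n^2}$, with left--T crossings bounded symmetrically. Summing all classes yields $\BIGO{(h+1)^2n^2}$, dominated by the T--T term.

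The main obstacle is the remaining class of right-window/left-window crossings, to which none of the above lemmas applies. I would resolve this by proving, as an extension of the Bose \etal{} argument for simple polygons, that a sink never needs to be charged to an R--L corner: every sink has at least one corner that is an R--R, L--L, or T-involving crossing. The only configuration that must be excluded is a sink whose bounding windows strictly alternate right, left, right, left around $\partial f$, since any repeated consecutive type already supplies a countable-type corner. The cleanest route is a rotational/monotonicity argument on the oriented window half planes defined in the notation section: traversing $\partial f$ while tracking on which side each bounding window keeps its pocket shows that strict right/left alternation around a single cell is geometrically impossible, so a countable-type corner always exists. Establishing this exclusion rigorously, rather than the arithmetic, is where the work lies.

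Finally, for tightness I would exhibit a family of polygons with $h$ holes realizing $\Omega((h+1)^2n^2)$ sinks. Starting from the $\Omega(n^2)$ simple-polygon construction of Bose \etal{} (which supplies the $(h+1)^2$ factor when $h$ is bounded), I would place $h$ thin holes in a central corridor so that each of $\Theta(n)$ vertices on the left and $\Theta(n)$ vertices on the right has $\Theta(h)$ T-windows threading the $\Theta(h)$ gaps between consecutive holes, making the $2h$ bound of Lemma \ref{lem:ds_sequence} tight. Arranging the two sides so that every leftward T-window crosses every rightward T-window produces $\Theta(h^2)$ crossings for each of the $\Theta(n^2)$ left--right vertex pairs, each anchoring a distinct cell that sees neither endpoint and is a local visibility minimum. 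This gives $\Omega(h^2n^2)=\Omega((h+1)^2n^2)$ sinks, matching the upper bound.
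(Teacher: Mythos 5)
Your proposal follows essentially the same route as the paper's proof: the same supporting lemmas deployed the same way (Lemma \ref{lem:bose9} for right--right corners, Lemma \ref{lem:T-windows_and_right_windows} together with Lemma \ref{lem:ds_sequence} for T--right corners, Corollary \ref{lem:TT_sinks} for T--T corners, and Lemma \ref{lem:windows_intersecting_segment} for the pocket-containment case), with identical arithmetic merely reorganized: the paper counts, for each ordered pair $(v_i,v)$, the $\BIGO{h+1}$ sinks having an edge from a right window of $v_i$ followed in clockwise order by a window edge of $v$, yielding $\BIGO{(h+1)n^2}$ sinks bordered by right or left windows, plus $\BIGO{(h+1)^2n^2}$ sinks charged to T--T crossings. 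Your lower-bound construction also matches the paper's in substance: the paper takes $\THETA{n}$ ``light strips'' (intersections of visibility polygons of adjacent vertex pairs), each cut by the holes into $\THETA{h}$ shadow strips, so that crossings of shadow strips produce $\THETA{(h+1)^2n^2}$ cells into which all dual edges point.

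The one step you explicitly defer---excluding sinks all of whose corners are right--left crossings---is exactly where the paper's proof does its structural work, and it is handled locally rather than by your proposed global alternation argument: the paper asserts that at a sink, an edge from a right window cannot be followed in clockwise order by a left-window edge, ``otherwise $s$ would not be a sink'' (a sink must lie on the pocket side of every bounding window, and the orientations of the two pocket half planes at such a corner make this impossible; this is the fact inherited from Bose \etal{} in the simple-polygon case). Your weaker claim---that strict R/L alternation around a cell is impossible, so some countable-type corner exists---would indeed suffice, but since you only gesture at a ``rotational/monotonicity argument'' and state that establishing it is where the work lies, your attempt as written has a gap at precisely the crux of the upper bound. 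A second, smaller omission: your charging scheme assumes every sink has a window--window corner, but a sink may be bounded by a single window plus arcs of $\partial P$ (e.g., an undivided pocket); such sinks need a separate, easy $\BIGO{n^2}$ count---though the paper's own write-up is similarly silent on this case.
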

\begin{proof}
Let $R_i$ be the set of right windows of $v_i$.  We want to show that there are $\BIGO{(h+1)n}$ sinks bordered by a window in $R_i$.  We do this by showing that, for a vertex $v$, there are $\BIGO{h}$ sinks having an edge from a window in $R_i$, followed in clockwise order by an edge from a window of $v$.  If $v$ is in a right pocket of $v_i$, $v$ can only see one window in $R_i$, namely the window bounding the pocket, and by Lemma \ref{lem:windows_intersecting_segment} we are done.  Otherwise, consider a sink $s$ that has one edge formed by a right window $r\in R_i$ and the next edge in clockwise order formed by a window of $v$.  The window of $v$ must be either a right window or a T-window, otherwise $s$ would not be a sink.  By Lemma \ref{lem:bose9}, at most one right window of $v$ can intersect a window in $R_i$.  By Lemma \ref{lem:T-windows_and_right_windows}, each T-window of $v$ crosses at most one window in $R_i$.  

We have shown that there are $\BIGO{(h+1)n}$ sinks bordered by a window in $R_i$.  We can do the same for sinks bordered by windows in $L_i$, the set of left windows of $v_i$.  Thus the total number of sinks bordered by all right and left windows is $\BIGO{(h+1)n^2}$.  By Corollary \ref{lem:TT_sinks} we know that the number of cells having two consecutive sides formed by T-windows is $\BIGO{(h+1)^2n^2}$, since each intersection point of two T-windows borders at most 4 cells.  The upper bound of $\BIGO{(h+1)^2n^2}$ for the total number of cells of minimal visibility follows.

We prove the lower bound by example.  See Figures \ref{fig:sinks_lower_bound_1}, \ref{fig:sinks_lower_bound_2}, \ref{fig:sinks_lower_bound_3}, and \ref{fig:sinks_lower_bound_4}.

\begin{figure}
\centering
\includegraphics[width=\linewidth]{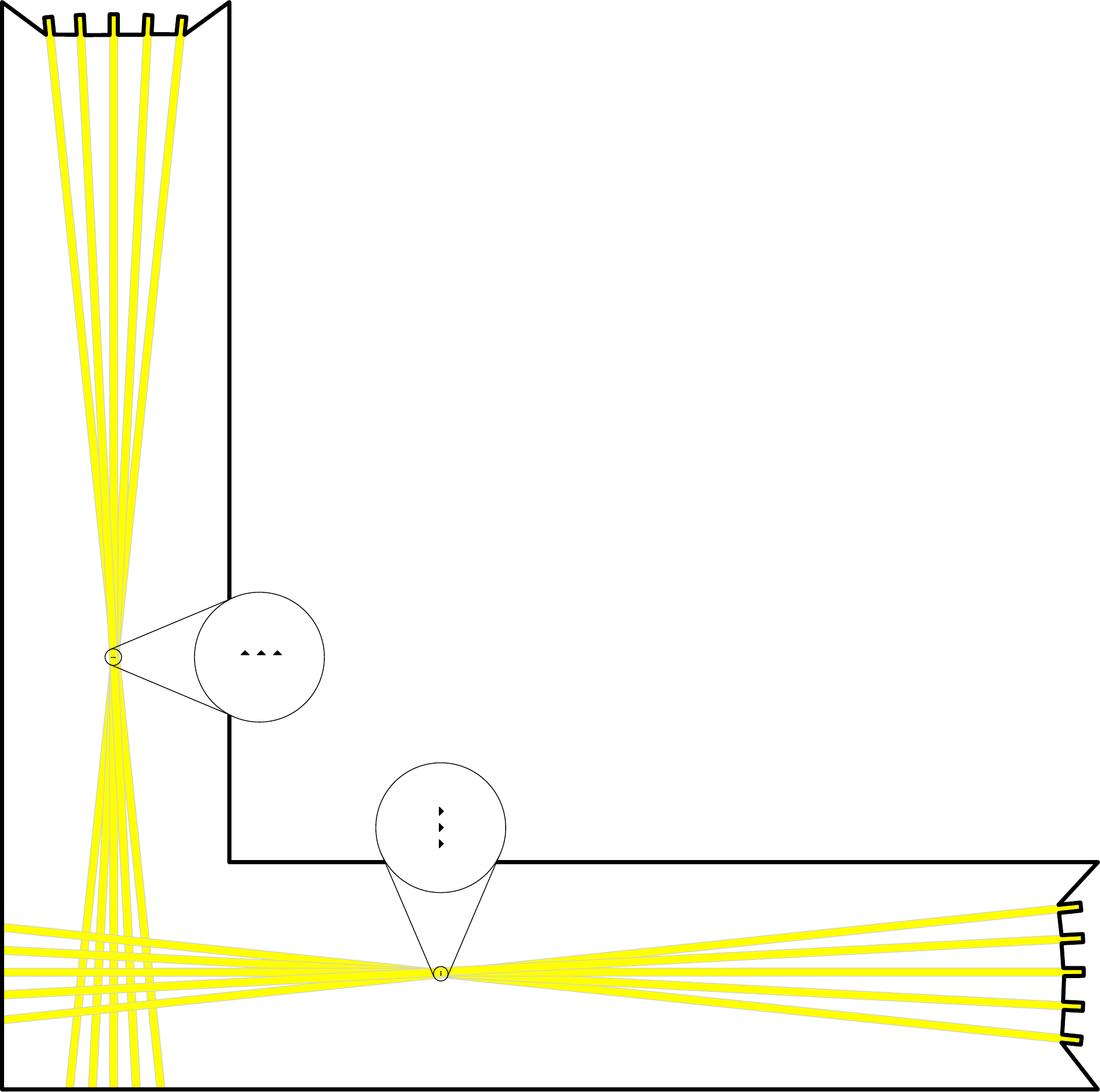}
\caption{A polygon with $\THETA{(h+1)^2n^2}$ sinks.  Each \emph{light strip}, indicated in yellow, is the intersection of visibility polygons of two adjacent vertices.  This polygon has $10 = \THETA{n}$ light strips and 6 holes.  The construction easily generalizes to higher values of $n$ and $h$.  This figure is diagrammatic; in the real polygon the holes would be closer to the region in which horizontal and vertical light strips interact.\label{fig:sinks_lower_bound_1}}
\end{figure}
\begin{figure}
\includegraphics[width=\linewidth]{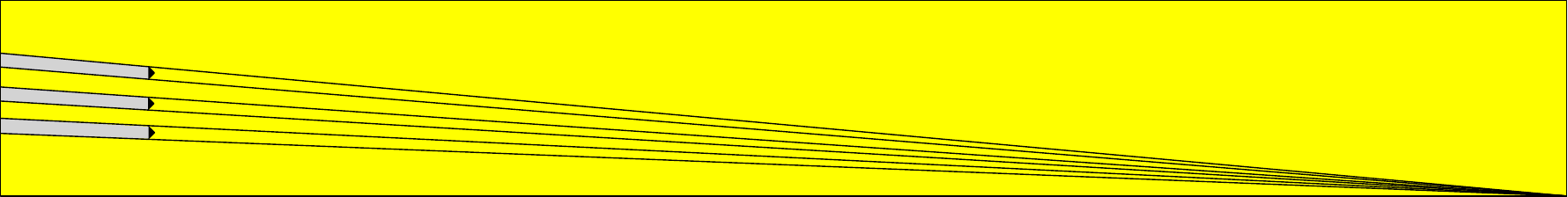}
\caption{Detail of one of the light strips.  The \emph{shadow strips}, indicated in light grey, are the regions not seen by the vertex at the bottom right.  Each light strip has $h/2$ shadow strips.\label{fig:sinks_lower_bound_2}}
\end{figure}

\begin{figure}
\begin{minipage}[t]{0.46\linewidth}
\centering
\includegraphics[width=\columnwidth]{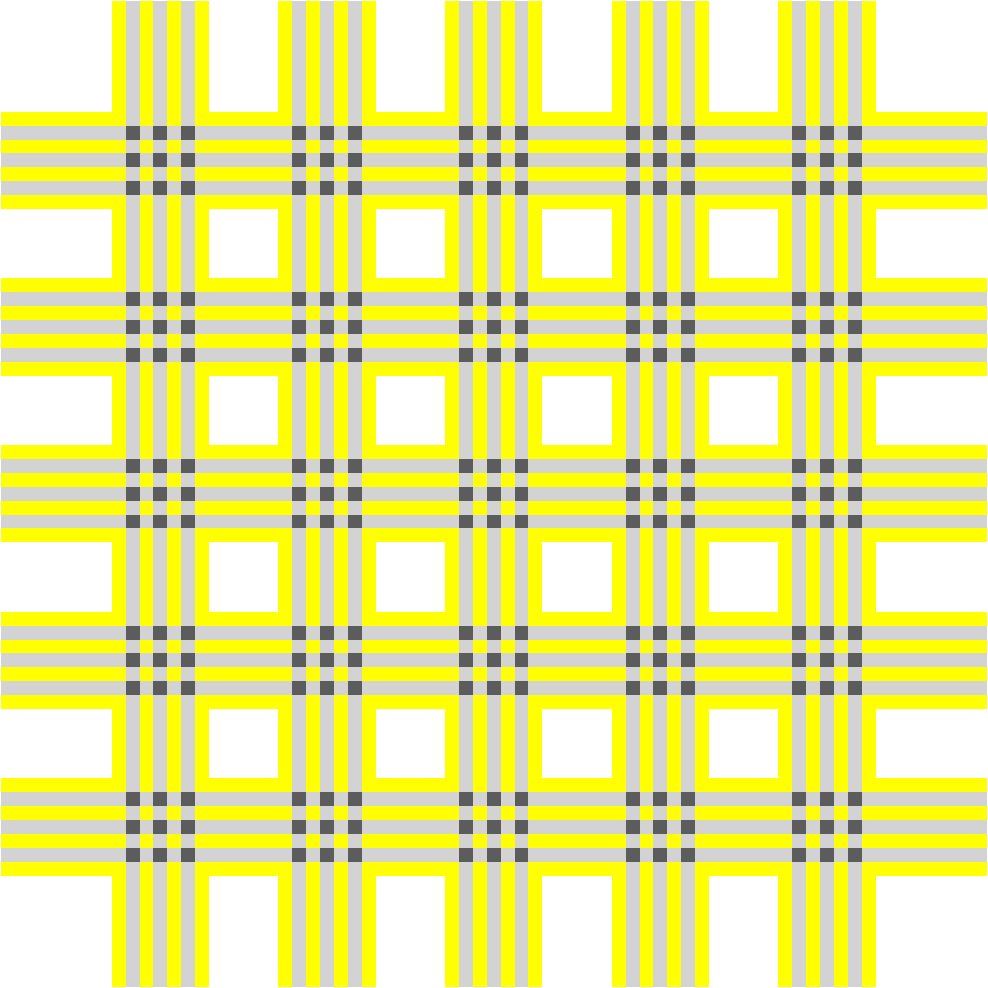}
\caption{Detail of the interaction of shadow strips.  There are $\THETA{(h+1)^2n^2}$ dark grey squares, each of which is the intersection of shadow strips and must contain a sink.\label{fig:sinks_lower_bound_3}}
\end{minipage}
\hspace{0.05\linewidth}
\begin{minipage}[t]{0.46\linewidth}
\centering
\includegraphics[width=\columnwidth]{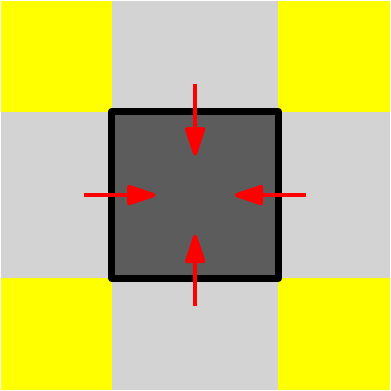}
\caption{Detail of the intersection of shadow strips.  Dual edges are indicated in red; since they are all pointing inwards, the dark grey region must contain a sink.\label{fig:sinks_lower_bound_4}}
\end{minipage}
\end{figure}

\qed
\end{proof}

\section{Greedy Approximation}

Based on the previous section we treat the guarding problem as the abstracted problem of finding a minimum hitting set for a discrete range space $\SSS = (X,\RRR)$, where $|X| = n$ and $|\RRR| = \BIGO{(h+1)^2n^2}$.  This range space can be obtained in time $\BIGO{(h+1)n^3}$.

Using the greedy algorithm for set cover/hitting set on this range space gives us a $\BIGO{\log n}$-approximation algorithm in $\BIGO{|X||\RRR|} = \BIGO{(h+1)^2n^3}$ time.

\hide{
\LinesNumbered
\begin{algorithm}
\SetAlgoLined
\While {there is an unguarded sink}{
  place a guard at the vertex that sees the most unguarded sinks\;\nllabel{line:place}
  \For {each vertex $v$}{
    update the list of sinks that $v$ sees\;
  }
}
\end{algorithm}
The \textbf{while} loop iterates $\BIGO{\opt}$ times.  Line \ref{line:place} can be performed in $\BIGO{n}$ time.
}

\section{Improved Approximation via $\eps$-Nets}

We now turn our attention to the task of achieving a better approximation factor when $\opt$ is small; for our bounds on time complexity we can assume that $\opt = \BIGO{n^{1/3}}$.  Otherwise an approximation factor of $\BIGO{\log n}$ is also $\BIGO{\log \opt}$.

\subsection{VC-Dimension and $\eps$-Nets}

We use standard techniques for approximating hitting set that are based on the concept of \emph{VC-dimension} first introduced by Vapnik and Chervonenkis \cite{vapnik1971} in the area of learning theory.

\begin{definition}[{\bf VC-Dimension} \cite{vapnik1971}]
For a range space $\SSS = (X,\RRR)$, let $Y$ be a maximum cardinality subset of $X$ such that $\RRR\cap Y = 2^Y$.  The \emph{VC-dimension} of $\SSS$ is equal to $|Y|$.
\end{definition}

We use the following bounds on the VC-dimension of visibility systems in polygons due to Valtr \cite{valtr1998}.

\begin{theorem}[\cite{valtr1998}]\label{thm:vc_bounds}
The visibility system of a polygon with $h$ holes has VC-dimension at most $23$ if $h=0$ (\ie, if the polygon is simple), and $2\log_2h + 4\log_2\log_2 h + o(1) = \BIGO{1+\log h}$ if $h\geq 1$.
\end{theorem}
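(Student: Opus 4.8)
The plan is to prove both bounds by controlling how many distinct subsets of a fixed point set can arise as seen-sets. Suppose a set $Y = \{y_1,\dots,y_d\}$ of $d$ points is shattered, so that for every $S \subseteq Y$ there is a witness $q_S \in P$ with $\mathrm{Vis}(q_S) \cap Y = S$, \ie, $q_S$ sees exactly the points of $S$; here I use that visibility is symmetric, so $q_S \in \mathrm{Vis}(y_i)$ iff $q_S$ sees $y_i$. Overlaying the $d$ visibility polygons $\mathrm{Vis}(y_1),\dots,\mathrm{Vis}(y_d)$ partitions $P$ into faces on each of which the seen-subset of $Y$ is constant, so shattering forces this overlay to have at least $2^d$ faces. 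The entire task thus reduces to bounding the number $N_h(d)$ of realizable seen-subsets and solving $2^d \le N_h(d)$ for $d$.

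The first thing to notice is that the crude bound fails: each $\partial(\mathrm{Vis}(y_i))$ consists of $\BIGO{n}$ windows, so the overlay has $\mathrm{poly}(n)$ faces, and this only yields $d = \BIGO{\log n}$, whereas the theorem asserts a bound independent of $n$. The heart of the argument must therefore be a forbidden-configuration analysis showing that the vast majority of the $\mathrm{poly}(n)$ faces repeat seen-subsets, so that $N_0(d)$ is in fact a constant and $N_h(d)$ grows only like $\mathrm{poly}(h)$. For a simple polygon I would exploit the non-crossing, planar structure of visibility: if $q$ sees $y_i$ and $y_j$ but $q'$ sees neither, the blocking windows separating them are chords of a simply connected region and hence cannot interleave arbitrarily. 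I would try to encode each seen-subset by the sequence of windows crossed along a geodesic and argue, via a Davenport--Schinzel or order-type bound analogous to Lemma \ref{lem:ds_sequence}, that only constantly many genuinely distinct patterns survive; pinning the precise constant $23$ then requires a careful, largely case-based extremal analysis rather than a clean asymptotic one.

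For a polygon with $h$ holes the same reduction applies, but now visibility segments can be blocked by holes, which is exactly what allows $N_h(d)$ to exceed a constant. The plan is to show that the number of combinatorially distinct blocking patterns among $d$ fixed sightlines is governed by the holes rather than by $n$: each hole contributes a bounded number of independent blocking events, made precise by a trans-component analysis in the spirit of Lemma \ref{lem:ds_sequence} and Lemma \ref{lem:windows_intersecting_segment}, so that $N_h(d) = \mathrm{poly}(h)$ for fixed $d$. Substituting into $2^d \le N_h(d)$ yields $d = \BIGO{\log h}$, and a more careful accounting of the exact growth rate of the number of patterns is what produces the sharper leading term $2\log_2 h + 4\log_2\log_2 h + \LITTLEO{1}$.

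I expect the main obstacle to be removing the dependence on $n$: the naive overlay argument is immediate but gives only $\BIGO{\log n}$, and all the real content lies in proving that in a simply connected region the number of realizable seen-subsets is an absolute constant. Establishing this requires a genuine structural theorem about the non-interleaving of visibility windows, and converting it into the explicit constant $23$ (together with the matching lower-bound constructions showing the bounds are essentially tight) is delicate and case-heavy; the $\BIGO{\log h}$ bound for holes, by contrast, should follow more routinely once the trans-component blocking count is in hand.
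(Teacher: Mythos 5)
You should first be aware that the paper contains no proof of this statement: Theorem \ref{thm:vc_bounds} is imported verbatim from Valtr \cite{valtr1998} and used purely as a black box to feed the random-sampling bound of Theorem \ref{thm:random_sample_eps_net}. Your attempt is therefore a blind reproof of Valtr's theorem, and judged as such it is a strategy outline rather than a proof. The opening reduction is fine: shattering a $d$-point set forces $2^d$ distinct realizable seen-subsets, so it suffices to bound the number $N_h(d)$ of such subsets. But everything that would actually beat the trivial $\mathrm{poly}(n)$ face count of the overlay --- the non-interleaving structure of windows, the geodesic window-sequence encoding, the trans-component blocking count --- is gestured at and never formulated, let alone proved; you explicitly defer the constant $23$ to an unspecified ``case-heavy'' analysis. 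Note also that Lemmas \ref{lem:ds_sequence} and \ref{lem:windows_intersecting_segment} of this paper bound the windows of a \emph{single} vertex crossing a single segment; they give no control on the overlay of $d$ visibility polygons that is independent of $n$, so invoking arguments ``in their spirit'' does not substitute for the missing structural theorem. (Valtr's actual argument rests on a different geometric lemma --- in a simple polygon, crossing visibility segments force visibility among their endpoints --- followed by a delicate case analysis; nothing in your sketch reaches or replaces that content.)

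Beyond incompleteness, one of your intermediate claims is false as stated: $N_0(d)$ is \emph{not} an absolute constant. Take a comb-shaped simple polygon with $d$ narrow parallel spikes and place $y_i$ at the tip of spike $i$: a viewpoint inside spike $i$ sees exactly $\{y_i\}$, so already $N_0(d)\geq d+1$, and the correct behavior (once the theorem is known, via Sauer--Shelah) is polynomial growth in $d$, \ie, $\BIGO{d^{23}}$. What a counting proof actually requires is a from-scratch bound $N_0(d)\leq p(d)$ with $p(d)<2^d$ for $d\geq 24$ --- a polynomial bound in $d$, independent of $n$, which is precisely the hard structural content you have not supplied. The holes case has the analogous defect: ``$N_h(d)=\mathrm{poly}(h)$ for fixed $d$'' substituted into $2^d\leq N_h(d)$ is incoherent without explicit joint dependence on $d$ and $h$. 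A bound of the shape $N_h(d)=\BIGO{d^4h^2}$ would indeed yield $d\leq 2\log_2 h+4\log_2\log_2 h+\LITTLEO{1}$, which is presumably why the theorem's leading terms look the way they do, but no such bound is established or even precisely conjectured in your proposal. In short: the reduction is standard and correct, the quantitative core is absent, and one stated milestone is provably wrong.
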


In systems of bounded VC-dimension, small $\eps$-nets can be constructed via random sampling.  We use the following result of Blumer \etal~\cite{blumer1989}.

\begin{theorem}[\cite{blumer1989}]\label{thm:random_sample_eps_net}
For a measure $\mu$ on the elements of a range space of VC-dimension $d$, and for any $\eps\in(0,1]$ and any $\delta\in(0,1]$, a random sample of $m(\eps, \delta)$ elements drawn according to $\mu$ forms an $\eps$-net with probability at least $1-\delta$, where
$$
m(\eps, \delta) = \max\kern-2pt\left( \frac{4}{\eps}\log\frac{2}{\delta} ~,~ \frac{8d}{\eps}\log\frac{13}{\eps} \right)~.
$$
\end{theorem}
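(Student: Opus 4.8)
The statement is the classical $\eps$-net theorem, so the plan is to reproduce the double-sampling argument of Haussler and Welzl as refined by Blumer \etal. Write $N$ for the random sample of size $m := m(\eps,\delta)$ drawn according to $\mu$, and call a range $R\in\RRR$ \emph{heavy} if $\mu(R)\geq\eps$. By definition $N$ fails to be an $\eps$-net exactly when some heavy range is disjoint from $N$; let $A$ denote this bad event, so the goal is to show $\Pr[A]\leq\delta$. The difficulty is that $\RRR$ may be infinite, so a naive union bound over all heavy ranges is unavailable. The whole purpose of bounded VC-dimension is to replace this infinite union by a finite one through the Sauer--Shelah lemma, but only after the ranges have been ``pinned down'' on a finite point set.

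To set this up I would introduce a second, independent \emph{ghost sample} $T$ of the same size $m$, and define the event $B$: some heavy range $R$ is disjoint from $N$ yet satisfies $|R\cap T|\geq \eps m/2$. The first key step is the transfer inequality $\Pr[B]\geq\frac12\Pr[A]$. Conditioning on $A$ and fixing a witnessing heavy range $R$ that $N$ misses, the count $|R\cap T|$ is a sum of $m$ independent indicators each of mean $\mu(R)\geq\eps$, so a Chernoff- (or even Chebyshev-) type tail bound shows $|R\cap T|\geq\eps m/2$ with probability at least $1/2$ once $m$ is large enough; this is where the $8d/\eps$ term inside $m(\eps,\delta)$ originates.

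The second key step bounds $\Pr[B]$ by symmetrization. Here I would view the $2m$ points of $N\cup T$ as first drawn as an unordered multiset and only then randomly partitioned into the halves $N$ and $T$. Conditioned on the multiset, the Sauer--Shelah lemma guarantees that the members of $\RRR$ cut out at most $\sum_{i=0}^{d}{2m\choose i}=\BIGO{(2m)^d}$ distinct subsets, so it suffices to union-bound over this finite family. For a single range $R$ containing $k\geq\eps m/2$ of the $2m$ points, the event that all $k$ land in $T$ and none in $N$ has probability at most $2^{-k}\leq 2^{-\eps m/2}$ under the random partition. The union bound then gives $\Pr[B]\leq\BIGO{(2m)^d}\,2^{-\eps m/2}$.

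Combining the two steps yields $\Pr[A]\leq 2\,\BIGO{(2m)^d}\,2^{-\eps m/2}$, and the remaining task is the routine but fiddly verification that the stated $m(\eps,\delta)=\max\!\left(\frac{4}{\eps}\log\frac{2}{\delta},\,\frac{8d}{\eps}\log\frac{13}{\eps}\right)$ forces this below $\delta$: the first argument of the maximum controls the $2^{-\eps m/2}$ factor against $\delta$, while the second dominates the polynomial $(2m)^d$. I expect the main obstacle to be the transfer inequality $\Pr[B]\geq\frac12\Pr[A]$, since it is the only step requiring genuine probabilistic concentration rather than counting, and pinning down its constant is exactly what makes the advertised value of $m(\eps,\delta)$ tight.
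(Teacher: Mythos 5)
The paper offers no proof of this theorem---it is imported verbatim, with citation, from Blumer \etal~\cite{blumer1989}---and your sketch correctly reproduces the standard double-sampling argument from that source: ghost sample, transfer inequality, symmetrization over a random partition of the $2m$ points, the Sauer--Shelah bound to finitize the union bound, and the final calibration of $m(\eps,\delta)$. One minor attribution slip worth noting: the transfer inequality $\Pr[B]\geq\frac12\Pr[A]$ only requires $m\geq 8/\eps$ (Chebyshev suffices), so the $\frac{8d}{\eps}\log\frac{13}{\eps}$ term does not originate there as you claim, but rather in the last step, where it is needed to beat the polynomial growth-function factor $\BIGO{(2m)^d}$ against $2^{-\eps m/2}$.
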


The following is a straightforward consequence of Theorems \ref{thm:vc_bounds} and \ref{thm:random_sample_eps_net}.

\begin{corollary}
There exists a function $f(h,\eps) = \BIGO{\Big(1+\log(h+1)\Big)\frac{1}{\eps}\log\frac{1}{\eps}}$ such that, for a measure $\mu$ on a polygon with $h\geq 0$ holes, and for any $\eps\in(0,1]$, a random sample of $f(h,\eps)$ points drawn according to $\mu$ forms an $\eps$-net with probability at least $1/2$.
\end{corollary}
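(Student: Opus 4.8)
The plan is to obtain the corollary by instantiating Theorem~\ref{thm:random_sample_eps_net} at the fixed failure probability $\delta = 1/2$ and feeding in the VC-dimension bounds of Theorem~\ref{thm:vc_bounds}. Fixing $\delta = 1/2$ immediately supplies the ``with probability at least $1/2$'' guarantee required by the statement, and it collapses the sample-size formula to
$$
m(\eps, 1/2) = \max\left( \frac{4}{\eps}\log 4 ~,~ \frac{8d}{\eps}\log\frac{13}{\eps} \right),
$$
where $d$ is the VC-dimension of the visibility system of the polygon. It then remains only to show that the right-hand side is $\BIGO{(1+\log(h+1))\frac{1}{\eps}\log\frac{1}{\eps}}$, and to set $f(h,\eps)$ equal to (the ceiling of) this quantity.

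First I would dispose of the left entry of the maximum: $\frac{4}{\eps}\log 4 = \BIGO{\frac{1}{\eps}}$ is independent of $h$ and is dominated by the target bound, so the analysis reduces to the right entry $\frac{8d}{\eps}\log\frac{13}{\eps}$. The key remaining step is to unify the two cases of Theorem~\ref{thm:vc_bounds} into the single estimate $d = \BIGO{1+\log(h+1)}$. When $h=0$ the theorem gives $d \le 23 = \BIGO{1}$, which matches $1+\log(h+1)=1$; when $h\ge 1$ it gives $d = \BIGO{1+\log h}$, and since $\log(h+1) = \Theta(\log h)$ for $h\ge 1$ the two expressions agree asymptotically. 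Substituting $d = \BIGO{1+\log(h+1)}$ into the right entry yields exactly $\frac{8d}{\eps}\log\frac{13}{\eps} = \BIGO{(1+\log(h+1))\frac{1}{\eps}\log\frac{1}{\eps}}$, and Theorem~\ref{thm:random_sample_eps_net} then certifies that a sample of this size is an $\eps$-net with probability at least $1/2$.

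The only genuinely delicate point is the logarithmic factor $\log\frac{13}{\eps}$, which is bounded below by $\log 13 > 0$ for every $\eps \in (0,1]$, whereas the target factor $\log\frac{1}{\eps}$ vanishes at $\eps = 1$. The clean reading---standard for $\eps$-net bounds---is that the asymptotics are taken as $\eps \to 0$, where $\log\frac{13}{\eps}$ and $\log\frac{1}{\eps}$ differ only by a multiplicative constant; for $\eps$ bounded away from $0$ the whole sample size is already $\BIGO{1+\log(h+1)}$ and is absorbed trivially. Apart from this bookkeeping the corollary is a direct composition of the two cited theorems, so I expect no substantive obstacle beyond making the $\eps \to 0$ convention explicit.
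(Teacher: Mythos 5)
Your proposal is correct and is exactly the argument the paper intends: the paper gives no explicit proof, calling the corollary ``a straightforward consequence'' of Theorems \ref{thm:vc_bounds} and \ref{thm:random_sample_eps_net}, and your instantiation with $\delta = 1/2$, the unification $d = \BIGO{1+\log(h+1)}$ across the $h=0$ and $h\geq 1$ cases, and the substitution into $m(\eps,\delta)$ is that consequence spelled out. Your side remark about $\log\frac{13}{\eps}$ versus $\log\frac{1}{\eps}$ near $\eps=1$ is a legitimate bookkeeping point handled correctly.
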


\subsection{Approximation via $\eps$-Nets}

Br\"{o}nnimann and Goodrich \cite{bronnimann1995} developed a method for turning algorithms for finding $\eps$-nets into approximation algorithms for finding minimum hitting sets.  The key is in finding an optimum, or approximately optimum, measure $\mu$ on the elements of a range space.  Their algorithm essentially `learns' a measure $\mu$ through an iterative doubling technique.  The optimum measure $\mu^*$ is the distribution that maximizes the value of $\eps^*$ such that every $\eps^*$-net is a hitting set.  In fact, the size of a minimum fractional hitting set is exactly $\opt_f = 1/\eps^*$ \cite{even2005}; this serves as a lower bound for $\opt$.

\paragraph{Iterative doubling.}
The B\&G algorithm \cite[\S 3.1]{bronnimann1995} finds a measure $\mu'$ such that every $\eps'$-net is a hitting set for some $\eps' \geq \frac{1}{2\cdot\opt}$.  The algorithm starts by assigning a weight of 1 to every element (when the algorithm terminates these weights are normalized to obtain $\mu'$).  In each iteration the algorithm makes one call to the $\eps$-net finder, which returns an $\eps$-net $Y\subseteq X$ and one call to a \emph{verifier}.  The verifier checks if the given set $Y$ is a hitting set.  If so, the algorithm returns $Y$ and terminates.  If not, the verifier returns a range $R$ that is not hit by $Y$; the algorithm doubles the weight of every element in $R$, then starts a new iteration.  The algorithm is guaranteed to terminate after $\BIGO{\opt\cdot\log|X|}$ iterations and the total weight of elements cannot exceed $|X|^4$ \cite{bronnimann1995}.  Since the B\&G algorithm does not know the value of $\opt$ or $\eps'$ \textit{a priori}, it must make several guesses, starting at a constant value such as $1/2$ and halving the guess after each failed run of the algorithm.  The result of this is that the entire algorithm must be run $\BIGO{\log \opt}$ times \cite{bronnimann1995}.

\paragraph{Random sampling and verification.}
We show that, for our application, the total time complexity of random sampling and verification is $\BIGO{n^3}$.

For verification we build a directed bipartite graph with $n$ vertices representing elements and $\BIGO{n^2}$ elements representing ranges, where an element vertex is adjacent to a range vertex if and only if the element is in the range.  This graph is constructed in $\BIGO{n^3}$ time and can be constructed once and used for all iterations of the algorithm.  Each range vertex is given a boolean flag indicating whether or not it is hit by the input set.  In each verification round we do the following:
\begin{compactenum}
\item In $\BIGO{|\RRR|}$ time, reset the flags.
\item In $\BIGO{|Y|\cdot|\RRR|}$ total time, for each element in the input set $Y$, mark all incident ranges as `hit'.
\item In $\BIGO{|\RRR|}$ time, scan through the ranges to find an unhit range or verify that all ranges are hit.
\item In $\BIGO{|\RRR|}$ time, if there is an unhit range, double the weight of every element in that range.
\end{compactenum}
Thus each verification round takes $\BIGO{|Y|\cdot|\RRR|}$ time.

For random sampling, we assume that we can sample a random bit in $\BIGO{1}$ time, allowing us to sample uniformly from an array of $k$ elements in $\BIGO{\log k}$ expected time.  Through all iteratoins of the algorithm, every element weight is a power of 2.  Since the total element weight does not exceed $|X|^4$, each element has its weight doubled $\BIGO{\log |X|}$ times in each run of the B\&G algorithm.  We can maintain a partition of the elements based on how many times each element's weight has been doubled.  In this way we can sample an element according to the weight function in $\BIGO{\polylog |X|}$ amortized time.  The additional cost of maintaining this partition is $\BIGO{|X|\log |X|}$ time per run of the algorithm.

\paragraph{Total time complexity.}
For our application, we have that $Y = \BIGO{n^{1/3}\polylog n}$ and $|\RRR| = \BIGO{(h+1)^2n^2}$.  In each iteration of the B\&G algorithm, the time complexity of sampling for the $\eps$-net finder is 
\begin{eqnarray*}
f(h,\eps)\BIGO{\polylog n} &=& \BIGO{\Big(1+\log(h+1)\Big)\frac{1}{\eps}\log\frac{1}{\eps}\polylog n} \\
&=& \BIGO{\opt \cdot \log \opt \cdot \polylog n} \\
&=& \BIGO{n^{1/3}\polylog n}~.
\end{eqnarray*}
The time complexity of the verifier is $\BIGO{|Y|\cdot|\RRR|} = \BIGO{(h+1)^2n^{7/3}\polylog n}$.  We perform $\BIGO{\log\opt}$ runs of the algorithm and in each run there are \linebreak$\BIGO{\opt\cdot\log n}$ iterations.  Thus the total number of iterations is $\BIGO{\opt\polylog n}$ $= \BIGO{n^{1/3}\polylog n}$.  Our total running time, including all runs of the algorithm, all iterations of the $\eps$-net finder and verifier, and additional overhead, does not exceed $\BIGO{(h+1)^2n^3}$.

\paragraph{Approximation ratio.}
When the last run of the algorithm terminates with success, we are left with a measure $\mu'$ on $X$, along with an $\eps'$-net that is a hitting set, where $\eps' \geq \frac{1}{2\cdot\opt}$.  The size of this hitting set is at most 
\begin{eqnarray*}
f(h, \eps') &=& \BIGO{\Big(1+\log(h+1)\Big)\frac{1}{\eps'}\log\frac{1}{\eps'}} \\
&=& \BIGO{\Big(1+\log(h+1)\Big)\opt\cdot\log\opt}~.
\end{eqnarray*}
Thus we have shown the following theorem.
\begin{theorem}\label{thm:main-withholes}
For simple polygons or polygons with the number of holes bounded by a constant, there exists an approximation algorithm running in $\BIGO{n^3}$ time with an approximation ratio of $\BIGO{\log\opt}$.  For polygons with $h\geq 2$ holes, there exists an approximation algorithm running in $\BIGO{h^2n^3}$ time with an approximation ratio of $\BIGO{\log h\log \opt}$.
\end{theorem}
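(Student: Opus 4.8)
The plan is to split on the size of $\opt$, which the algorithm does not know in advance, and to run two procedures whose outputs we compare. First I would dispose of the regime $\opt = \Omega(n^{1/3})$. Here $\log\opt = \THETA{\log n}$, so the greedy hitting-set algorithm of the previous section, which runs in $\BIGO{(h+1)^2 n^3}$ time on the range space $\SSS = (X,\RRR)$ with $|X| = n$ and $|\RRR| = \BIGO{(h+1)^2 n^2}$ (Theorem~\ref{thm:sinks}), already produces a guarding set within a factor $\BIGO{\log n} = \BIGO{\log\opt}$ of optimal. Because every hole contributes at least three vertices we have $h = \BIGO{n}$, so this $\BIGO{\log n}$ bound is in fact at least as strong as the $\BIGO{\log h\log\opt}$ target in the holes case. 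It therefore suffices to handle $\opt = \BIGO{n^{1/3}}$, and the final algorithm runs the greedy procedure together with the $\eps$-net procedure below, capping the latter's doubling search at guessed values of $\opt$ up to $\THETA{n^{1/3}}$, and returns the smaller guarding set.

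For $\opt = \BIGO{n^{1/3}}$ I would feed the random-sampling $\eps$-net finder into the Br\"{o}nnimann and Goodrich iterative-doubling framework \cite{bronnimann1995}. The finder is the one supplied by the corollary following Theorems~\ref{thm:vc_bounds} and \ref{thm:random_sample_eps_net}: a sample of $f(h,\eps) = \BIGO{(1+\log(h+1))\frac{1}{\eps}\log\frac{1}{\eps}}$ elements is an $\eps$-net with constant probability, the VC-dimension bound of Theorem~\ref{thm:vc_bounds} being exactly what keeps this sample small. The B\&G framework performs $\BIGO{\log\opt}$ runs, each of $\BIGO{\opt\log n}$ doubling iterations that each call the sampler and the verifier once; on a successful run it returns an $\eps'$-net that is a hitting set with $\eps' \ge \frac{1}{2\opt}$, of size at most $f(h,\eps') = \BIGO{(1+\log(h+1))\opt\log\opt}$, giving an approximation ratio of $\BIGO{(1+\log(h+1))\log\opt}$. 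Since a random sample is an $\eps$-net only with constant probability while the doubling analysis needs every sample in a run to be a net, I would drive the failure probability $\delta$ down to $1/\mathrm{poly}(n)$ and union-bound over the iterations; by Theorem~\ref{thm:random_sample_eps_net} this enlarges the sample only by a $\BIGO{\log n}$ factor, which is why $|Y| = \BIGO{n^{1/3}\polylog n}$ rather than $\BIGO{n^{1/3}}$.

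The step I expect to be the main obstacle is the time accounting, not because any piece is hard but because the target $\BIGO{(h+1)^2 n^3}$ must survive multiplying the per-iteration cost by the total iteration count. The range space is built once in $\BIGO{(h+1)n^3}$ time and the bipartite element--range incidence graph used by the verifier once in $\BIGO{n^3}$ time. Per iteration the sampler costs $f(h,\eps)\cdot\BIGO{\polylog n} = \BIGO{n^{1/3}\polylog n}$ and the verifier costs $\BIGO{|Y|\cdot|\RRR|} = \BIGO{(h+1)^2 n^{7/3}\polylog n}$. As the total number of iterations over all $\BIGO{\log\opt}$ runs is $\BIGO{\opt\polylog n} = \BIGO{n^{1/3}\polylog n}$, the verifier dominates with $\BIGO{(h+1)^2 n^{8/3}\polylog n} = \BIGO{(h+1)^2 n^3}$, absorbing the sampler, the preprocessing, and the $\BIGO{|X|\log|X|}$ weighted-sampling bookkeeping. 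The only care needed is that the $(1+\log(h+1))$ and $\log\opt$ factors really are swallowed by the gap between $n^{8/3}$ and $n^3$, which holds since $h = \BIGO{n}$ forces $\log(h+1) = \BIGO{\log n}$. Specializing finishes the proof: for $h = 0$ or $h = \BIGO{1}$ we get ratio $\BIGO{\log\opt}$ in time $\BIGO{n^3}$, while for $h \ge 2$ we have $1 + \log(h+1) = \BIGO{\log h}$, giving ratio $\BIGO{\log h\log\opt}$ in time $\BIGO{h^2 n^3}$, and combining with the greedy fallback covers all $\opt$.
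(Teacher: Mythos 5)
Your proposal is correct and follows essentially the same route as the paper: the same range space on the $\BIGO{(h+1)^2n^2}$ minimal-visibility cells, Valtr's VC-dimension bound, the Blumer \etal{} random-sampling $\eps$-net finder inside the Br\"{o}nnimann--Goodrich doubling framework with $\BIGO{\log\opt}$ runs of $\BIGO{\opt\log n}$ iterations each, the same bipartite incidence-graph verifier costing $\BIGO{|Y|\cdot|\RRR|}$ per round, and the same accounting giving $\BIGO{(h+1)^2n^{8/3}\polylog n} = \BIGO{(h+1)^2n^3}$ total. Your two small deviations---running greedy in parallel to make the assumption $\opt = \BIGO{n^{1/3}}$ algorithmic (the paper merely asserts it), and boosting the sampler's failure probability to $1/\mathrm{poly}(n)$ with a union bound rather than relying on B\&G's built-in handling of Monte Carlo net finders (the paper's $\polylog n$ in $|Y|$ instead comes from the $(1+\log(h+1))\log\opt$ factors)---are harmless and, if anything, make the argument more explicit.
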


\section{Further Improved Approximation for Simple Polygons}

For simple polygons, King and Kirkpatrick \cite{king2010loglog} presented an $\eps$-net finder that returns $\eps$-nets of size $\BIGO{\frac{1}{\eps}\log\log\frac{1}{\eps}}$; they did not analyze its time complexity, rather they simply stated that it runs in polynomial time.  Here we show that it runs in time $\BIGO{n^2\log\log\frac{1}{\eps}}$.  This means that, for simple polygons, we can use it to replace the random sampling $\eps$-net finder of the previous section without pushing the total running time above $\BIGO{n^3}$.

First we note that the B\&G algorithm always uses values of $\eps$ that are powers of 2.  This means that 
$$
2^{{2}^{\ceil{\log\log 1/\eps}}} = \BIGO{1/\eps}~,
$$which slightly facilitates analysis.  With this restriction on $\eps$, the K\&K net finder recursively partitions the vertex set of the polygon, keeping its cyclic ordering.  Define $t = \ceil{\log\log t}$.  At the $i^{\mathrm{th}}$ level of the partition, each subset is further divided into $b_i$ subsets, with
$$
b_i = \left\{
\begin{array}{lcl}
2^{2^{t-1}+1}\cdot 4t\cdot 2^{1-t} &,&i=1\\
2^{2^{t-i}+1}&,&1<i\leq t~.
\end{array}
\right.
$$
If $f_i$ is the number of new fragments created by the $i^{\rm th}$ fragmentation step, this gives us
$$
f_i = \left\{
\begin{array}{ccl}
1 &,&i=0\\
4t\cdot 2^{2^t - 2^{t-i} -t +i +1}&,&0<i\leq t \\
4t\cdot 2^{2^t}&,&i = t~.
\end{array}
\right.
$$
In $\BIGO{n\log\log\frac 1 \eps}$ time we can build a tree corresponding to the hierarchical decomposition in which each node stores the corresponding subset in cyclic order.  This tree has $\BIGO{\frac 1 \eps \log\log \frac 1 \eps}$ nodes.  Determining guards to place for a pair of sibling fragments, respectively storing vertex sets $U_1$ and $U_2$, can be done with $\BIGO{|U_1||U_2|}$ calls to a visibility oracle.  The visibility matrix of the vertices of $P$ can easily be built in $\BIGO{n^3}$ time and serves as a constant-time oracle.

Each of the $f_{i-1}$ nodes at level $i-1$ contains $\BIGO{n/f_{i-1}}$ vertices; it has $b_i$ normal children, each with $\BIGO{n/f_i}$ vertices, and one dummy child with $\BIGO{n}$ vertices.  The cost of placing all guards at level $i$ is therefore
\begin{eqnarray*}
\BIGO{f_{i-1}\left( b_i^2 \left(\frac{n}{f_i}\right)^2 + b_i \left(n\cdot\frac{n}{f_i}\right)\right)}
&=&
\BIGO{\frac{n^2b_if_{i-1}}{f_i}\left( \frac{b_i}{f_i} + 1 \right)} \\
&=&
\BIGO{n^2\left( \frac{b_i}{f_i} + 1 \right)} \\
&=&
\BIGO{n^2}~.
\end{eqnarray*}
Thus the cost of placing all guards at all levels is $\BIGO{n^2\log\log \frac{1}{\eps}}$, as desired.

Using this $\eps$-net finder instead of random sampling, we can achieve an approximation ratio of $\BIGO{\log\log\opt}$ for simple polygons in $\BIGO{n^3}$ time.

\begin{theorem}\label{thm:main-loglog}
For simple polygons, there exists an approximation algorithm running in $\BIGO{n^3}$ time with an approximation ratio of $\BIGO{\log\log\opt}$.
\end{theorem}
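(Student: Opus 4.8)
My plan is to reuse the Br\"onnimann--Goodrich (B\&G) scaffolding from the proof of Theorem \ref{thm:main-withholes} essentially verbatim, changing only the $\eps$-net finder: in place of the random sampler I would plug in the deterministic K\&K finder, whose running time of $\BIGO{n^2\log\log\frac{1}{\eps}}$ per invocation was just established. For a simple polygon $h=0$, so the discretized range space has $|X|=n$ and $|\RRR|=\BIGO{n^2}$ and is built once in $\BIGO{n^3}$ time (Section 2), as is the visibility matrix that serves as the finder's constant-time oracle. As in Section 4 I would work under the assumption $\opt=\BIGO{n^{1/3}}$, and I would note that B\&G only ever queries the finder at values of $\eps$ that are powers of two, which is exactly the regime in which the K\&K size bound $\BIGO{\frac{1}{\eps}\log\log\frac{1}{\eps}}$ and the identity $2^{2^{\ceil{\log\log 1/\eps}}}=\BIGO{1/\eps}$ were stated, so the two components compose cleanly.

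The approximation ratio then falls out immediately: when the final successful run of B\&G halts we hold an $\eps'$-net that is a hitting set with $\eps'\geq\frac{1}{2\cdot\opt}$, and the K\&K finder guarantees its size is $\BIGO{\frac{1}{\eps'}\log\log\frac{1}{\eps'}}=\BIGO{\opt\log\log\opt}$, so the computed guarding set is within a factor $\BIGO{\log\log\opt}$ of optimal. The remaining work is the time accounting, which I would carry out exactly as in Section 4. The scaffolding performs $\BIGO{\log\opt}$ runs of $\BIGO{\opt\log n}$ iterations, hence $\BIGO{\opt\polylog n}=\BIGO{n^{1/3}\polylog n}$ iterations in all; each finder call costs $\BIGO{n^2\log\log\frac{1}{\eps}}=\BIGO{n^2\polylog n}$ and each verification round costs $\BIGO{|Y|\cdot|\RRR|}=\BIGO{\opt\log\log\opt\cdot n^2}=\BIGO{n^{7/3}\polylog n}$ since $|Y|=\BIGO{n^{1/3}\polylog n}$. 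Multiplying the iteration count by the per-iteration cost gives $\BIGO{n^{8/3}\polylog n}=\LITTLEO{n^3}$, so together with the $\BIGO{n^3}$ preprocessing the total stays at $\BIGO{n^3}$; the smaller K\&K net only helps here relative to the $\log\opt$ analysis.

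The point I expect to require the most care is the regime of large $\opt$. For the $\log\opt$ theorem one escapes the assumption $\opt=\BIGO{n^{1/3}}$ by observing that the greedy algorithm already yields $\BIGO{\log n}=\BIGO{\log\opt}$ when $\opt$ is large; that fallback is unavailable here, because $\BIGO{\log n}$ is \emph{not} $\BIGO{\log\log\opt}$. I would first check how far the B\&G/K\&K approach itself can be pushed: the verification term $\BIGO{\opt^2 n^2\polylog n}$ remains within $\BIGO{n^3}$ up to about $\opt=n^{1/2}$, which comfortably covers the stated assumption, so the only delicate band is when $\opt$ exceeds $n^{1/2}$. For that band I would need either a separate argument producing an $\BIGO{\log\log\opt}=\BIGO{\log\log n}$-approximate guarding set in $\BIGO{n^3}$ time, or an explicit restriction of the claim to the small-$\opt$ regime in which the $\eps$-net construction is affordable; pinning this down is where I expect the genuine subtlety to lie, everything else being routine bookkeeping on top of the machinery of Sections 2--4.
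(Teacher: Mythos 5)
Your proposal is correct and essentially identical to the paper's proof: the paper likewise establishes the $\BIGO{n^2\log\log\frac{1}{\eps}}$ running time of the King--Kirkpatrick net finder and then simply substitutes it for the random sampler inside the Br\"onnimann--Goodrich machinery of Section 4, with the same accounting giving a hitting set of size $\BIGO{\frac{1}{\eps'}\log\log\frac{1}{\eps'}}=\BIGO{\opt\log\log\opt}$ within the $\BIGO{n^3}$ budget. The large-$\opt$ subtlety you flag in your final paragraph is genuine but is not resolved by the paper either---it silently inherits the assumption $\opt=\BIGO{n^{1/3}}$ from Section 4, whose justification there (greedy already achieves $\BIGO{\log n}=\BIGO{\log\opt}$ when $\opt$ is large) indeed does not transfer to a $\log\log\opt$ ratio---so on this point you have been more careful than the source.
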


\bibliographystyle{splncs03}
\bibliography{./fast_vertex_guarding}

\end{document}